\newcommand{\punt}[1]{}
\newcommand{\cmnt}[1]{}
\newtheorem{theorem}{Theorem}
\newtheorem{lemma}[theorem]{Lemma}
\newtheorem{corollary}[theorem]{Corollary}
\newtheorem{definition}{Definition}
\newcounter{history}
\newcommand{\hist}[1]{\refstepcounter{history} {#1}}
\newenvironment{proof}[1][Proof]{\noindent\textbf{#1.} }{} %\rule{0.5em}{0.5em}\\}
\newcommand{\secref}[1]{Section~\ref{sec:#1}}
\newcommand{\figref}[1]{Figure~\ref{fig:#1}}
\newcommand{\stref}[1]{step~\ref{step:#1}}
\newcommand{\lemref}[1]{Lemma~\ref{lem:#1}}
\newcommand{\histref}[1]{\ref{hist:#1}}
\newcommand{\theqed}{$\Box$}
\newcommand{\qed}{\hspace*{\fill}\theqed\\\vspace*{-0.5em}}
\newcommand{\Wset}{\textit{Wset}}
\newcommand{\ignore}[1]{}
\newcommand{\tobj} {transaction object}
\newcommand{\txns}[1] {txns(#1)}
\newcommand {\comm}[1] {committed(#1)}
\newcommand {\aborted}[1] {aborted(#1)}
\newcommand {\incomp}[1] {live(#1)}
\newcommand {\tcomp} {t-complete}
\newcommand {\tinc} {t-incomplete}
\newcommand{\evts}[1] {evts(#1)}
\newcommand{\nseq} {non-sequential}
\newcommand{\tseq} {t-sequential}
\newcommand{\lastw} {lastWrite}
\newcommand{\lwrite}[2] {#2.lastWrite(#1)}
\newcommand{\vwrite} {valWrite}
\newcommand{\valw}[2] {#2.valWrite(#1)}
\newcommand{\valid} {valid}
\newcommand{\validity} {validity}
\newcommand{\legal} {legal}
\newcommand{\legality} {legality}
\newcommand{\op} {operation}
\newcommand{\cc} {correctness-criterion}
\newcommand{\ccs} {correctness-criteria}
\newcommand{\gen}[1] {gen(#1)}
\newcommand{\begt} {tbegin}
\newcommand{\tryc} {tryC}
\newcommand{\trya} {tryA}
\newcommand{\rset}[1] {rset(#1)}
\newcommand{\wset}[1] {wset(#1)}
\newcommand{\inv}[1] {#1.inv()}
\newcommand{\rsp}[2] {#1.rsp(#2)}
\newcommand{\opq} {opaque}
\newcommand{\opty} {opacity}
\newcommand{\coop} {co-opaque}
\newcommand{\coopty} {co\text{-}opacity}
\newcommand{\mvop} {mvc-opaque}
\newcommand{\mvopty} {mvc\text{-}opacity}
\newcommand{\vwc} {VWC}
\newcommand{\lo} {LO}
\newcommand{\lopty} {LO}
\newcommand{\lopq} {locally-opaque}
\newcommand{\mvsr} {MVSR}
\newcommand{\csr} {CSR}
\newcommand{\mvs} {multi-version STM}
\newcommand{\perm} {permissive}
\newcommand{\pness} {permissiveness}
\newcommand{\permfn}[1] {perm(#1)}
\newcommand{\ols} {ols-permissive}
\newcommand{\olsness} {ols-permissiveness}
\newcommand{\olsfn}[1] {ols\text{-}perm(#1)}
\newcommand{\multv} {non-single-versioned}
\newcommand{\fmvc} {multi-version conflict order}
\newcommand{\mvc} {mvc}
\newcommand{\mvco} {mvc order}
\newcommand{\mvconflict} {mv-conflict}
\newcommand{\mvlo} {MVLO}
\newcommand{\mvcg}[1] {MVCG(#1)}
\newcommand{\mvg} {multi-version conflict graph}
\title{\bf Multiversion Conflict Notion for Transactional Memory Systems \footnote{A preliminary version of this work was presented at WTTM 2013 and published in \cite{PriSat:MVC:Corr:2013}}}
\author{Priyanka Kumar \\
priyanka@iith.ac.in \\
%CSE Dept, Indian Institute of Technology Hyderabad \\
CSE Department \\
IIT Hyderabad \\
India 
\and 
Sathya Peri \\
sathya\_p@iith.ac.in \\
%CSE Dept, Indian Institute of Technology Hyderabad \\
CSE Department \\
IIT Hyderabad \\
India  
}
\date{}
\begin{document}

\maketitle              % typeset the title of the contribution
\thispagestyle{empty}

\begin{abstract}
In recent years, Software Transactional Memory systems (STMs) have garnered significant interest as an elegant alternative for addressing concurrency issues in memory. STM systems take optimistic approach. Multiple transactions are allowed to execute concurrently. On completion, each transaction is validated and if any inconsistency is observed it is aborted. Otherwise it is allowed to commit.

In databases a class of histories called as conflict-serializability (CSR) based on the notion of conflicts have been identified, whose membership can be efficiently verified. As a result, CSR is the commonly used correctness criterion in databases In fact all known single-version schedulers known for databases are a subset of CSR. Similarly, using the notion of conflicts, a correctness criterion, conflict-opacity (co-opacity) which is a sub-class of can be designed whose membership can be verified in polynomial time. Using the verification mechanism, an efficient STM implementation can be designed that is permissive w.r.t co-opacity. Further, many STM implementations have been developed that using the notion of conflicts.

By storing multiple versions for each transaction object, multi-version STMs provide more concurrency than single-version STMs. But the main drawback of co-opacity is that it does not admit histories that are uses multiple versions. This has motivated us to develop a new conflict notions for multi-version STMs. In this paper, we present a new conflict notion multi-version conflict. Using this conflict notion, we identify a new subclass of opacity, \mvopty{} that admits multi-versioned histories and whose membership can be verified in polynomial time. We show that co-opacity is a proper subset of this class. 

An important requirement that arises while building a multi-version STM system is to decide ``on the spot'' or schedule online among the various versions available, which version should a transaction read from? Unfortunately this notion of online scheduling can sometimes lead to unnecessary aborts of transactions if not done carefully. To capture the notion of online scheduling which avoid unnecessary aborts in STMs, we have identified a new concept \textit{\olsness} and is defined w.r.t a \cc, similar to \pness. We show that it is impossible for a STM system that is permissive w.r.t opacity to such avoid un-necessary aborts i.e. satisfy \olsness{} w.r.t \opty. We show this result is true for \mvopty{} as well. 

%This work introduces the notion of multi-version conflict notion. Using this conflict notion, we define a subclass of \opty, \mvopty{} whose membership can be verified in polynomial time. 
%In fact, we show that it is impossible for a STM system that is permissive w.r.t opacity to such avoid un-necessary aborts. We show this result is true for \mvopty{} as well. 
\end{abstract}

%\keywords
%Concurrency, correctness, atomic operation, opacity, conflict opacity, Multiversion Conflict, software transactional memory, timestamp, multiversion.

\section{Introduction}
\label{sec:intro}

In recent years, Software Transactional Memory systems (STMs) \cite{HerlMoss:1993:SigArch,ShavTou:1995:PODC} have garnered significant interest as an elegant alternative for addressing concurrency issues in memory. STM systems take optimistic approach. Multiple transactions are allowed to execute concurrently. On completion, each transaction is validated and if any inconsistency is observed it is \emph{aborted}. Otherwise it is allowed to \emph{commit}. 

An important requirement of STM systems is to precisely identify the criterion as to when a transaction should be aborted/committed. Commonly accepted \cc{} for STM systems is \emph{opacity} proposed by Guerraoui, and Kapalka \cite{GuerKap:2008:PPoPP}. Opacity requires all the transactions including aborted to appear to execute sequentially in an order that agrees with the order of non-overlapping transactions. Unlike the correctness criterion for traditional databases serializability \cite{Papad:1979:JACM}, opacity ensures that even aborted transactions read consistent values. 

Another important requirement of STM system is to ensure that transactions do not abort unnecessarily. This referred to as the \emph{progress} condition. It would be ideal to abort a transaction only when it does not violate correctness requirement (such as opacity). However it was observed in \cite{attiyaHill:sinmvperm:tcs:2012} that many STM systems developed so far spuriously abort transactions even when not required. A \emph{permissive} STM \cite{Guer+:disc:2008} does not abort a transaction unless committing of it violates the \cc{}.

With the increase in concurrency, more transactions may conflict and abort, especially in presence many long-running transactions which can have a very bad impact on performance \cite{AydAbd:2008:Serial:transact}. Perelman et al \cite{Perel+:2011:SMV:DISC} observe that read-only transactions play a significant role in various types of applications. But long read-only transactions could be aborted multiple times in many of the current STM systems \cite{herlihy+:2003:stm-dynamic:podc,dice:2006:tl2:disc}. In fact Perelman et al \cite{Perel+:2011:SMV:DISC} show that many STM systems waste 80\% their time in aborts due to read-only transactions. 

It was observed that by storing multiple versions of each object, \mvs{s} can ensure that more read \op{s} succeed, i.e., not return abort. History $H\histref{illus}$ shown in \figref{illus} illustrates this idea. \hist{$H\histref{illus}:r_1(x, 0) w_2(x, 10) w_2(y, 10) c_2 r_1(y, 0) c_1$} \label{hist:illus}.
In this history the read on $y$ by $T_1$ returns 0 instead of the previous closest write of 10 by $T_2$. This is possible by having multiple versions for $y$. As a result, this history is \opq{} with the equivalent correct execution being $T_1 T_2$. Had there not been multiple versions, $r_2(y)$ would have been forced to read the only available version which is 10. This value would make the read cause $r_2(y)$ to not be consistent (\opq) and hence abort. 

%\begin{history}
%$r_1(x, 0) w_2(x, 10) w_2(y, 10) c_2 r_1(y, 0) c_1$. 
%\end{history} 
%$H1:r_1(x, 0) w_2(x, 10) w_2(y, 10) c_2 r_1(y, 0) c_1$. 

\begin{figure}[tbph]
%\centerline{\scalebox{0.5}{\input{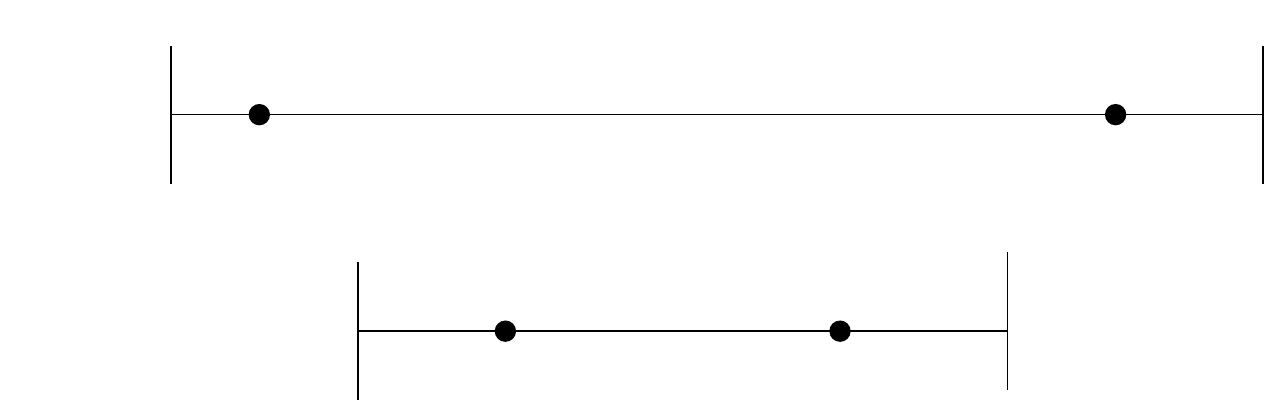_t}}}
\centering
\includegraphics[scale=0.7]{./illus.0}
\caption{Pictorial representation of a History $H\histref{illus}$}
\label{fig:illus}
\end{figure}

%It is well known that verifying memberhsip of \vsr{} is NP-Complete \cite{Papad:1979:JACM}. To cope with this hardness \emph{conflict-serializability} or \emph{\csr}, a subclass of \vsr, whose membership can be verified in polynomial time is generally used. 

Checking for membership of \textit{multi-version view-serializability (\mvsr)} \cite[chap. 3]{WeiVoss:2002:Morg}, the correctness criterion for databases, has been proved to be NP-Complete \cite{PapadKanel:1984:MultVer:TDS}. We believe that the membership of opacity, similar to \mvsr, can not be efficiently verified.
%Papad:1979:JACM, Vidya:1987:AINF

In databases a sub-class of \mvsr, \textit{conflict-serializability (\csr)} \cite[chap. 3]{WeiVoss:2002:Morg} has been identified, whose membership can be efficiently verified. As a result, \csr{} is the commonly used correctness criterion in databases since it can be efficiently verified. In fact all known single-version schedulers known for databases are a subset of \csr. Similarly, using the notion of conflicts, a sub-class of opacity, \textit{conflict-opacity (\coopty)} can be designed whose membership can be verified in polynomial time. Further, using the verification mechanism, an efficient STM implementation can be designed that is permissive w.r.t \coopty{} \cite{KuzSat:NI:ICDCN:2014}. Further, many STM implementations have been developed that using the idea of \csr \cite{AydAbd:2008:Serial:transact,SinhaMalik:RuntimeSee:IPDPS:2010}. 

By storing multiple versions for each \tobj, multi-version STMs provide more concurrency than single-version STMs. But the main drawback of \coopty{} is that it does not admit histories that are uses multiple versions. In other words, the set of histories exported by any STM implementation that uses multiple versions is not a subset of \coopty. Thus it can be seen that the \coopty{} does not take advantage of the concurrency provided by using multiple versions. %As a result, it is not clear if a multi-version STM implementation can be developed that is permissive w.r.t some sub-class of opacity. 

This has motivated us to develop a new conflict notions for multi-version STMs. In this paper, we present a new conflict notion \emph{\mvconflict}. Using this conflict notion, we identify a new subclass of \opty, \emph{\mvopty} whose membership can be verified in polynomial time. We further show that \coopty is a proper subset of this class. Further, the conflict notion developed is applicable on \nseq{} histories as well unlike traditional conflicts. 

In this paper, although we employed this conflict notion on \opty{} to develop the sub-class \mvopty, we believe that this conflict notion is generic enough to be applicable on other \cc{} such as local opacity \cite{KuzSat:NI:ICDCN:2014}, virtual worlds consistency \cite{Imbs+:2009:PODC} etc.

An important question that arises while building a multi-version STM system using the proposed \mvconflict{} notion: among the various versions available, which version should a transaction read from? The question was first analyzed in the context of database systems \cite{HadzPapad:AAMVCC:PODS:1985, PapadKanel:1984:MultVer:TDS}. A transactional system (either Database or STM) must decide ``on the spot'' or schedule online which version a transaction can read from based on the past history. 

Unfortunately this notion of online scheduling can sometimes lead to unnecessary aborts of transactions. For instance, suppose a transaction $T_i$ requests a read on \tobj{} $x$. Let the STM system has option of returning a value for $x$ from among two versions, say $v_1$ and $v_2$. Suppose that the STM sytsem returns a version $v_2$. It is possible that this read can cause another $T_j$ to abort in later to maintain correctness. But this abort of $T_1$ could have been avoided if the system returned $v_1$ instead. This concept is better illustrated in \secref{ols} where we show the difficulties with online scheduling. 

To capture the notion of online scheduling which avoid unnecessary aborts in STMs, we have identified a new concept \textit{\olsness}. It is defined w.r.t a \cc, similar to \pness. We show that it is impossible for a STM system that is permissive w.r.t opacity to such avoid un-necessary aborts i.e. satisfy \olsness{} w.r.t \opty. We show this result is true for \mvopty{} as well. We believe that this impossibility result will generalize to other \cc{} such as \lopty{} \cite{KuzSat:NI:ICDCN:2014}. 

\vspace{1mm}
\noindent
\textit{Roadmap.} We describe our system model in \secref{model}. In \secref{mcn} we formally define the conflict notion and describe how to verify its membership in polynomial time using graph characterization. In \secref{ols}, we describe about the difficulty of online scheduling and associated impossibility results. In \secref{disc}, we discuss about extending the \mvconflict{} notion to local-opacity and then give a brief outline of how to develop a STM system using \mvopty. Finally we conclude in \secref{conc}. 

%we describe the working principle of MCN algorithm. In Section~\ref{sec:gar} we are collecting the garbage. Our results are summarized in Section~\ref{sec:conc}.

\section{System Model and Preliminaries}
\label{sec:model}

The notions and definitions described in this section follow the definitions of \cite{KuzSat:NI:ICDCN:2014, Attiya+:DUOp:ICDCS:2013}. We assume a system of $n$ processes (or threads), $p_1,\ldots,p_n$ that access a collection of \emph{objects} via atomic \emph{transactions}. The STM systems is a software library that exports to the processes with the following \emph{transactional operations} or \emph{methods}: (i) \textit{\begt} \op, that starts a new transaction. It returns an unique transaction id; (ii) the \textit{write}$(x,v)$ operation that updates object $x$ with value $v$, (iii) the \textit{read}$(x)$ operation that returns a value read in $x$; (iv) \textit{tryC}$()$ that tries to commit the transaction and returns \textit{ok} or \textit{abort}; (iv) \textit{\trya}$()$ that aborts the transaction and returns \textit{abort}. The objects accessed by the read and write \op{s} are called as \tobj{s}. For the sake of simplicity, we assume that the values written by all the transactions are unique. We also assume that the library ensures \emph{deferred update semantics}, i.e. the write performed by a transaction $T_k$ on a \tobj{} $x$ will be visible to other transactions only after the commit of $T_k$. 

The transactional \op{s} could be non-atomic. To model this, we assume that all these \op{s} have an \emph{invocation} and \emph{response} events. The \op{s} of a transaction consists of the following events: \begt{} consists of $\inv{\begt}$ which is followed by a $\rsp{\begt}{i}$ where $i$ is the id of the transaction. The $read$ by transaction $T_k$ is denoted as $\inv{read_k(x)}$ which is followed by $\rsp{read_k(x)}{v}$ where $v$ is either the current value of $x$ or $A$. Similarly, the $write$ by transaction $T_k$ is denoted as $\inv{write_k(x,v)}$ which is followed by $\rsp{write_k(x,v)}{r}$ where $r$ denotes the result of the write \op. It can either be $ok$ or $A$. The $\tryc$ by transaction $T_k$ is denoted as $\inv{\tryc_k}$ which is followed by $\rsp{\tryc_k}{r}$ where $r$ is either $ok$ or $A$. Similarly, $\trya$ by transaction $T_k$ is denoted as $\inv{\trya_k}$ which is followed by $\rsp{\trya_k}{A}$. When $A$ is returned by an \op, it implies that the transaction $T_k$ is aborted. 

In the case where the \op{s} are atomic, then we simplify the notation. \begt{} is represented as $\begt_k$, read as $read_k(x,v)/read_k(x,A)$, read as $write_k(x,v)/write_k(x,A)$, \tryc{} as $\tryc_k(ok)/\tryc_k(A)$, \trya{} as $\trya_k(A)$. %For simplicity we also refer to $\tryc_k(ok)$ as $c_k$ and a $A$ response to any of the transactional \op{s} of $T_k$ ($read_k(x,A), write_k(x,A), \tryc_k(A), \trya_k(A)$) as $a_k$. 

When the \textit{write}, \textit{read} and \textit{\tryc}$()$ return $A$, we say that the operation is \emph{forcefully aborted}. Otherwise, we say that the operation has \emph{successfully} executed. For simplicity we also refer to $\rsp{\tryc_k}{ok}$ ($\tryc_k(ok)$ in case of atomic \op{s}) as $c_k$. Similarly, when a transactional \op{} returns $A$, i.e. $\rsp{read_k(x)}{A}, \rsp{write_k(x,v)}{A}, \rsp{\tryc_k}{A},\rsp{\trya_k}{A}$ ($read_k(x,A), write_k(x,A), \\
\tryc_k(A), \trya_k(A)$ respectively), we denote the event as $a_k$. Along the same lines, we refer to (non-atomic) read and write \op{s} as $r_k(x,v), w_k(x,v)$ when the invocation and response events are not relevant to the context. Sometimes, we also drop the \tobj{} $x$ and the value $v$ read/written depending on the context. 

For a transaction $T_k$, we denote all the events (\op{s} in case of sequential histories) of $T_k$ as $\evts{T_k}$. All the \tobj{s} read by $T_k$ are denoted as $\rset{T_k}$ and all the \tobj{s} written by it are denoted as $\wset{T_k}$. 

\vspace{1mm}
\noindent
\textit{Histories.} A \emph{history} is a sequence of \emph{events}, i.e., a sequence of invocations and responses of transactional operations. The collection of events is denoted as $\evts{H}$. We denote $<_H$ a total order on the transactional events in $H$. We identify a history $H$ as tuple $\langle \evts{H},<_H \rangle$. \figref{nseq} shows history \hist{$H\histref{nseq}: \inv{w_1(x, 5)}~ \inv{w_2(x, 10)}~ \rsp{w_1(x, 5)}{ok}~ \rsp{w_2(x, 10)}{ok}~ \inv{r_3(x)}~ \rsp{\tryc_1}{ok} \\
\rsp{\tryc_2}{ok}~ \rsp{r_3(x)}{5}$} \label{hist:nseq}. In \figref{nseq}, for simplicity we have not shown inv and rsp events separately. 

We say a history is \emph{sequential} if invocation of each transactional operation is immediately followed by a matching response. For simplicity, we treat each transactional operation as atomic in sequential histories. The order $<_H$ is a total order on the transactional \op{s} in $H$ for sequential histories. History $H\histref{illus}$ shown in \figref{illus} is a sequential history. We also refer to histories which are not sequential as \emph{\nseq}. 

\begin{figure}[tbph]
\centerline{\scalebox{0.5}{\input{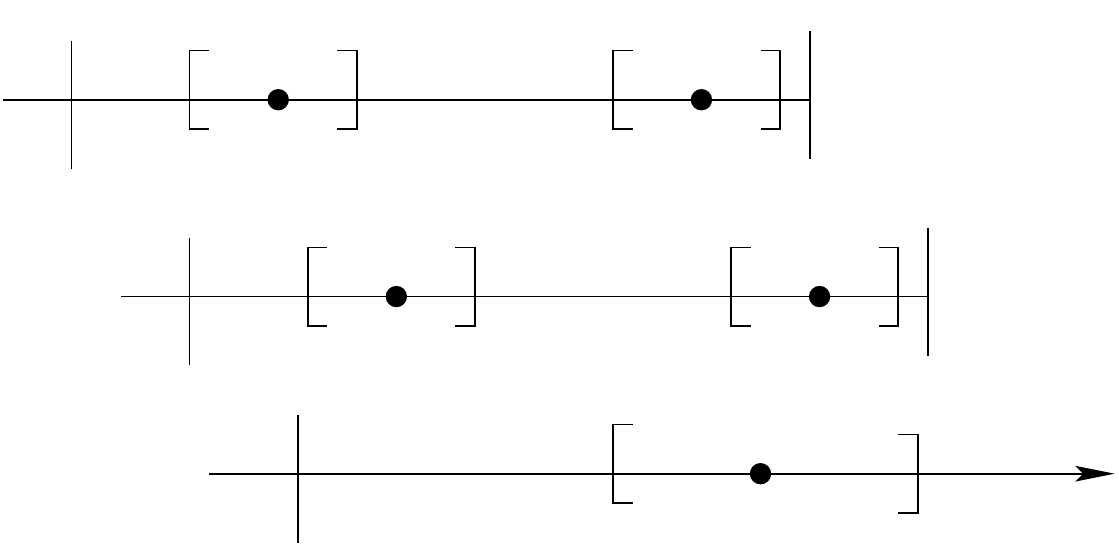_t}}}
\caption{Pictorial representation of a History $H\histref{nseq}$}
\label{fig:nseq}
\end{figure}

Let $H|T_k$ denote the sub-history consisting of events of $T_k$ in $H$. We only consider \emph{well-formed} histories here, i.e., (1) each $H|T_k$ consists of  a read-only prefix (consisting of read operations only), followed by a write-only part (consisting of write operations only), possibly \emph{completed} with a $\tryc$ or $\trya$ operation. In the read-only prefix, each transaction consists of read on a \tobj{} $x$ only once. This restriction brings no loss of  generality \cite{KR:2011:OPODIS}; (2) a thread invoking transactional \op{s} never invokes another operation before receiving a response from the previous one; it does not invoke any operation $op_k$ after receiving a $c_k$ or $a_k$ response.
%$ok$ in response to a $\tryc_k$ \op{} or $A$ in response to any transactional \op{} of $T_k$.  %$c_k$ or $a_k$.

We denote the set of transactions that appear in $H$ is denoted by $\txns{H}$. A transaction $T_k \in \txns{H}$ is \emph{complete} in H if $H|T_k$ ends with a response event. In other words, all the \op{s} in $T_k$ end with a response event. We assume that all the \op{s} in sequential histories are complete. A transaction $T_k \in \txns{H}$ is \emph{\tcomp} if $H|T_k$ ends with $a_k$ or $c_k$ ; otherwise, $T_k$ is \tinc. The history H is \tcomp{} if all transactions in $\txns{H}$ are \tcomp. The set of committed (resp., aborted) transactions in $H$ is denoted by $\comm{H}$ (resp., $\aborted{H}$). The set of \emph{incomplete} or \emph{live} transactions in $H$ is denoted by $\emph{\incomp{H}}$ ($\incomp{H} =\txns{H} - \comm{H} - \aborted{H}$). In $H\histref{nseq}$, $T_3$ is live while $T_1, T_2$ are committed.

We assume that every history has an initial committed transaction $T_0$ that initializes all the \tobj{s} with 0. We say that two histories, $H$ and $H'$ are \emph{equivalent}, denoted as $H \approx H'$ if $\evts{H} = \evts{H'}$ i.e. all the events in $H$ and $H'$ are the same. Note that $H$ could be \nseq{} whereas $H'$ could be sequential. 

\vspace{1mm}
\noindent
\textit{Transaction orders.} For two transactions $T_k,T_m \in \txns{H}$, we say that  $T_k$ \emph{precedes} $T_m$ in the \emph{real-time order} of $H$, denote $T_k \prec_H^{RT} T_m$, if $T_k$ is \tcomp{} in $H$ and the last event of $T_k$ precedes the first event of $T_m$ in $H$. If neither $T_k\prec_H^{RT} T_m$ nor $T_m \prec_H^{RT} T_k$, then $T_k$ and $T_m$ \emph{overlap} in $H$. Consider two histories $h, H'$ that are equivalent to each other, i.e. $\evts{H} = \evts{H'}$. We say a history $H$ \emph{respects} the real-time order of another history $H'$ if all the real-time orders of $H'$ are also in $H$, i.e. $\prec_{H'}^{RT} \subseteq \prec_H^{RT}$.
%$H'$ are also in $H$, i.e. $\prec_{H'}^{RT} \subseteq \prec_H^{RT}$.

A history $H$ is \emph{\tseq{}} if there are no overlapping transactions in $H$, i.e., every two transactions are related by the real-time order. 

\vspace{1mm}
\noindent
\textit{Correctness Criterion.} We denote a collection of histories as \emph{\cc}. Typically, all the histories of a \cc{} satisfy some property. Serializability \cite{Papad:1979:JACM} is the well-accepted \cc{} in databases. Several \ccs{} have been proposed for STMs such as Opacity \cite{GuerKap:2008:PPoPP}, Virtual World Consistency \cite{Imbs+:2009:PODC}, Local Opacity \cite{KuzSat:NI:ICDCN:2014}, TMS \cite{Doherty+:2009:REFINE} etc. 

\vspace{1mm}
\noindent
\textit{Implementations.} A STM \emph{implementation} provides the processes with functions for implementing read, write, \tryc{} (and possibly \trya) functions. We denote the set of histories \emph{generated} by a STM implementation $I$ as $\gen{I}$. We say that an implementation $I$ is correct w.r.t to a \cc{} $C$ if all the histories generated by $I$ are in $C$ i.e. $\gen{I} \subseteq C$. 

\vspace{1mm}
\noindent
\textit{Progress Conditions.} Let $C$ be a \cc{} with $H$ in it. Let $T_a$ be an aborted transaction in $H$. We say that a history $H$ is permissive w.r.t $C$ if committing $T_a$, by replacing the abort value returned by an \op{} in $T_a$ with some non-abort value, would cause $H$ to violate $C$. In other words, if $T_a$ is committed then $H$ will no longer be in $C$. We denote the set of histories permissive w.r.t $C$ as $\permfn{C}$. We say that STM implementation $I$ is permissive \cite{Guer+:disc:2008} w.r.t some \cc{} $C$ (such as opacity) if every history $H$ generated by $I$ is \perm{} w.r.t $C$, i.e., $\gen{I} \subseteq \permfn{C}$.

\section{New Conflict Notion for Multi-Version Systems}
\label{sec:mcn}

In this section, we define a new conflict notion for multi-version STM systems. First, we describe about the \emph{Opacity} \cite{GuerKap:2008:PPoPP}, a popular \cc. Then we describe the new conflict notion, \emph{\fmvc}. 

\subsection{Opacity}
\label{subsec:valid}

We define a few notations on histories for describing opacity. 

\vspace{1mm}
\noindent
\textit{Valid, Legal and Multi-versioned histories.} Let $H$ be a \nseq{} history and $r_k(x, v)$ be a successful read {\op} (i.e $v \neq A$) in $H$. Then $r_k(x, v)$, is said to be \emph{\valid} if there is a transaction $T_j$ in $H$ such that $T_j$ is committed in $H$, $w_j(x, v)$ is in $\evts{T_j}$ and the response of $r_k$ does not occur before invocation of $\tryc_j$ in $H$. Formally, $\langle r_k(x, v)$  is \valid{} $\Rightarrow \exists T_j: (\rsp{r_k(x)}{v} \nless_{H} \inv{\tryc_j}) \land (w_j(x, v) \in \evts{T_j}) \land (v \neq A) \rangle$. We say that the commit \op{} $\rsp{\tryc_j}{ok}$ (or $c_j$) is $r_k$'s \emph{\vwrite} and formally denote it as $\valw{r_k}{H}$. The history $H$ is \valid{} if all its successful read \op{s} are \valid. The notion of \validity{} formalizes deferred update semantics described in \secref{model}.

In $H\histref{nseq}$, $\rsp{\tryc_1}{ok} = c_1 = \valw{r_3(x,5)}{H\histref{nseq}}$, $\rsp{r_k(x)}{5} \nless_{H\histref{nseq}} \inv{\tryc_1}$ and $(w_1(x, 5) \in \evts{T_1})$. Hence, $r_3(x,5)$ is \valid{} and as a result, $H\histref{nseq}$ is \valid{} as well. 

For a sequential history $H$, the definition of \validity{} of $r_k(x,v)$ boils down as follows: a successful read $r_k(x, v)$, is said to be \emph{\valid} if there is a transaction $T_j$ in $H$ that commits before $r_k$ and writes $v$ to $x$. Formally, $\langle r_k(x, v)$  is \valid{} $\Rightarrow \exists T_j: (c_j <_{H} r_k(x, v)) \land (w_j(x, v) \in \evts{T_j}) \land (v \neq A) \rangle$. 

\cmnt{ 
Let $H$ be a history and $r_k(x, v)$ be a successful read {\op} (i.e $v \neq A$) in $H$. Then $r_k(x, v)$, is said to be \emph{\valid} if there is a transaction $T_j$ in $H$ that commits before $r_K$ and $w_j(x, v)$ is in $\evts{T_j}$. Formally, $\langle r_k(x, v)$  is \valid{} $\Rightarrow \exists T_j: (c_j <_{H} r_k(x, v)) \land (w_j(x, v) \in \evts{T_j}) \land (v \neq A) \rangle$. We say that the commit \op{} $c_j$ is $r_k$'s \emph{\vwrite} and formally denote it as $\valw{r_k}{H}$. The history $H$ is \valid{} if all its successful read \op{s} are \valid. 
}

%If there are multiple such committed transactions that write $v$ to $x$, then $r_k$ \vwrite{} is the commit \op{} closest to $r_x$. (non-sequential) 
Consider a sequential history $H$. We define $r_k(x, v)$'s \textit{\lastw{}} as the latest commit event $c_k$ such that $c_k$ precedes $r_k(x, v)$ in $H$ and $x\in\Wset(T_k)$ ($T_k$ can also be $T_0$). Formally, we denote it as $\lwrite{r_k}{H}$. A successful read \op{} $r_k(x, v)$ (i.e $v \neq A$), is said to be \emph{\legal{}} if transaction $T_k$ (which contains  $r_k$'s \lastw{}) also writes $v$ onto $x$. Formally, $\langle r_k(x, v)$ \text{is \legal{}} $\Rightarrow (v \neq A) \land (\lwrite{r_k(x, v)}{H} = c_k) \land (w_k(x,v) \in \evts{T_k}) \rangle$.  The sequential history $H$ is \legal{} if all its successful read \op{s} are \legal. Thus from the definition, we get that if $H$ is \legal{} then it is also \valid.

%\figref{ex1} shows a pictorial representation of a history $H1:r_1(x, 0) w_2(x, 10) w_2(y, 10) c_2 r_1(y, 0) c_1$. 
It can be seen that in $H\histref{illus}$, $c_0 = \valw{r_1(x,0)}{H\histref{illus}} = \lwrite{r_1(x,0)}{H\histref{illus}}$. Hence, $r_1(x,0)$ is \legal. But $c_0 = \valw{r_1(y,0)}{H\histref{illus}} \neq c_1 = \lwrite{r_1(y,0)}{H\histref{illus}}$. Thus, $r_1(y,0)$ is \valid{} but not \legal.

We denote a sequential history $H$ as \textit{\multv} if it is \valid{} but \textbf{not} \legal. If a history $H$ is \multv{}, then there is at least one read, say $r_k(x)$ in $H$ that is \valid{} but not \legal. The history $H\histref{illus}$ is \multv. This definition can not be generalized to \nseq{} histories as \legality{} is not defined for \nseq{} histories. 
%Along the same lines, we say that a STM implementation is \multv{} if it exports atleast one history that is \multv. 

\vspace{1mm}
\noindent
\textit{Opacity.} To define the \cc{} \opty, we first define completion of a history that is incomplete. For a history $H$, we construct the completion of $H$, \emph{opq-completion} denoted $\overline{H^o}$, as follows (similar to \cite{Attiya+:DUOp:ICDCS:2013}):
\begin{enumerate}
\item for every complete transaction $T_k$ in $H$ that is not t-complete, insert the event sequence: \\
$\inv{\trya_k}~ \rsp{\trya_k}{A}$ after the last event of transaction $T_k$; \label{step:ccomp}
\item for every incomplete operation $op_k$ of $T_k$ in $H$, if $op_k = read_k \lor write_k \lor tryA_k$, then insert the response event $A$ somewhere after the invocation of $op_k$;
\item for every incomplete $\tryc_k$ operation where $T_k$ is in $H$, insert response event $ok$ or $A$ somewhere after the invocation of $\tryc_k$. \label{step:camb}
\end{enumerate}

In case of a sequential history $H$, the completion $\overline{H^o}$ is constructed by inserting an $\trya_k(A)$ (or $a_k$) after the last \op{} of transaction $T_k$, for every transaction $T_k$ in $H$ that is \tinc. 

%Next, we define another notion of completion for a sequential history $H$, \emph{conflict-completion} denoted as $\overline{H^c}$. Since the history is sequential all the \op{s} in the history are complete. Thus for any transaction $T_k$ that is incomplete in $H$, we complete it by appending $\trya_k(a)$ \op{} after the last \op{} of transaction $T_k$. This is same as \stref{ccomp} in the definition of completion. 

A history $H$ is said to be \textit{opaque} \cite{GuerKap:2008:PPoPP,tm-book} if $H$ is \valid{} and there exists a \tseq{} legal history $S$ such that (1) $S$ is equivalent to $\overline{H^o}$ and (2) $S$ respects $\prec_{H}^{RT}$, i.e $\prec_{H}^{RT} \subseteq \prec_{S}^{RT}$. 

By requiring $S$ being equivalent to $\overline{H^o}$, opacity treats all the incomplete transactions as aborted. The \validity{} requirement on $H$ ensures that write \op{s} of aborted transactions are ignored. This definition of opacity is closer in spirit to \emph{du-opacity} \cite{Attiya+:DUOp:ICDCS:2013}. It can be seen that both the histories $H\histref{illus}$ and $H\histref{nseq}$ are \opq. The \opq{} equivalent \tseq{} history for $H\histref{illus}$ being $T_1T_2$ and the equivalent \tseq{} histories of $H\histref{nseq}$ are $T_1T_3T_2$, $T_2T_1T_3$ .

\subsection{Motivation for a New Conflict Notion}
\label{subsec:motive}

It is not clear if checking whether a history is opaque or can be performed in polynomial time. Checking for membership of \textit{multi-version view-serializability (\mvsr)} \cite[chap. 3]{WeiVoss:2002:Morg}, the correctness criterion for databases, has been proved to be NP-Complete \cite{PapadKanel:1984:MultVer:TDS}. We believe that the membership of opacity, similar to \mvsr, can not be efficiently verified.
%Papad:1979:JACM, Vidya:1987:AINF

In databases a sub-class of \mvsr, \textit{conflict-serializability (\csr)} \cite[chap. 3]{WeiVoss:2002:Morg} has been identified, whose membership can be efficiently verified. As a result, \csr{} is the commonly used \cc{} in databases since it can be efficiently verified. In fact all known single-version schedulers known for databases are a subset of \csr. Similarly, using the notion of conflicts, a sub-class of opacity, \textit{conflict-opacity (\coopty)} can be designed whose membership can be verified in polynomial time. Further, using the verification mechanism, an efficient STM implementation can be designed that is permissive w.r.t \coopty{} \cite{KuzSat:Corr:2012, KuzSat:NI:ICDCN:2014}.

As already discussed in \secref{intro}, by storing multiple versions for each \tobj, multi-version STMs provide more concurrency than single-version STMs. But the main drawback of \coopty{} is that it does not admit histories that are \multv. Thus \coopty{} does not take advantage of the concurrency provided by using multiple versions. Another big drawback being that \coopty{} does not admit histories that are \nseq. In other words, the set of histories exported by many STM implementation are not a subset of \coopty. Hence, proving correctness of these STM systems is difficult. In the rest of this sub-section, we formally define \coopty{} and show the drawbacks. Some of the definitions and proofs in this section are coming directly from \cite{KuzSat:Corr:2012, KuzSat:NI:ICDCN:2014}.

We define \coopty{} using \textit{conflict order} \cite[Chap. 3]{WeiVoss:2002:Morg}. Consider a sequential history $H$. For two transactions $T_k$ and $T_m$ in $\txns{H}$, we say that \emph{$T_k$  precedes $T_m$ in conflict order}, denoted $T_k \prec_H^{CO} T_m$, (a) (c-c order): $c_k <_H c_m$ and $\wset{T_k} \cap \wset{T_m} \neq\emptyset$; (b) (c-r order): $c_k <_H r_m(x,v)$, $x \in \wset{T_k}$ and $v \neq A$; (c) (r-w order) $r_k(x,v) <_H c_m$, $x \in \wset{T_m}$ and $v \neq A$.

Thus, it can be seen that the conflict order is defined only on \op{s} that have successfully executed. Further, it can also be seen that this order is defined only for histories that are sequential. 

%Next, we define another notion of completion for a sequential history $H$, \emph{conflict-completion} denoted as $\overline{H^c}$. Since the history is sequential all the \op{s} in the history are complete. Thus for any transaction $T_k$ that is incomplete in $H$, we complete it by appending $\trya_k(a)$ \op{} after the last \op{} of transaction $T_k$. This is same as \stref{ccomp} in the definition of completion. 

Using conflict order, \coopty{} is defined as follows: A sequential history $H$ is said to be \emph{conflict opaque} or \emph{\coop} if $H$ is \valid{} and there exists a t-sequential legal history $S$ such that (1) $S$ is equivalent to $\overline{H^o}$ and (2) $S$ respects $\prec_{H}^{RT}$ and $\prec_{H}^{CO}$. 

%Readers familiar with the databases literature may notice \coopty{} is analogous to the \emph{order commit conflict serializability} (OCSR)~\cite{WeiVoss:2002:Morg}.

From the definitions of conflict order and \coopty{} it is clear that these notions are only specific to sequential histories. Thus, history $H\histref{nseq}$ is not \coop. It must be noted that $H\histref{nseq}$ can be generated by a STM system that maintains only a single version of each \tobj. The asynchronous nature of thread execution can result in $H\histref{nseq}$ by the STM system. 

Having seen a drawback, we will next show that if any sequential history is \multv, then it can not be in \coopty. %We can now prove that if a history is \multv, then it is not in \coopty. Due to lack of space, we are only outlining the lemma and theorem statements in this write-up. 

\begin{lemma}
\label{lem:co-eq}
Consider two sequential histories $H1$ and $H2$ such that $H1$ is equivalent to $H2$. Suppose $H1$ respects conflict order of $H2$, i.e., $\prec_{H1}^{CO} \subseteq \prec_{H2}^{CO}$. Then, $\prec_{H1}^{CO} = \prec_{H2}^{CO}$. 
\end{lemma}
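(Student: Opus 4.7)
The given inclusion is one direction, so my plan is to establish the reverse inclusion $\prec_{H_2}^{CO} \subseteq \prec_{H_1}^{CO}$. Pick any $(T_k,T_m) \in \prec_{H_2}^{CO}$. The pair is witnessed by one of the three defining clauses (c-c), (c-r), (r-w). Each clause packages two pieces of information: (i) a \emph{structural} condition (an intersection of write sets, or ``$x\in\wset{T_k}$'' / ``$x\in\wset{T_m}$'') that depends only on the per-transaction events; and (ii) an \emph{ordering} condition on two distinguished events in $H_2$ (namely $c_k<c_m$, or $c_k<r_m(x,v)$, or $r_k(x,v)<c_m$). Because $H_1 \approx H_2$, the structural part carries over verbatim to $H_1$, and both distinguished events appear in $H_1$; the only question is in what order $<_{H_1}$ places them.

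The first step is a trichotomy on that order. If the two events occur in $H_1$ in the same order as in $H_2$, the same clause fires and $(T_k,T_m)\in \prec_{H_1}^{CO}$, as wanted. Otherwise they occur in the opposite order, and I would show that the ``symmetric'' clause fires in $H_1$ giving $T_m \prec_{H_1}^{CO} T_k$: explicitly, (a) reversed is (a); (b) reversed becomes (c) with $T_k,T_m$ swapped; (c) reversed becomes (b) with $T_k,T_m$ swapped. This step just needs that the structural predicates are symmetric enough across the clauses, which is immediate.

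Now I would apply the hypothesis $\prec_{H_1}^{CO} \subseteq \prec_{H_2}^{CO}$ to conclude $T_m \prec_{H_2}^{CO} T_k$ as well. The crux is that the coexistence of $T_k \prec_{H_2}^{CO} T_m$ and $T_m \prec_{H_2}^{CO} T_k$ must be impossible. Here I would do a 3$\times$3 sub-case analysis on which clauses support each direction, and invoke the well-formedness requirement (the read-only prefix of any transaction precedes its $\tryc$ event) to chain the various sub-orderings of $c_k$, $r_k$, $c_m$, $r_m$ in $H_2$ until a cycle in $<_{H_2}$ (a strict total order) appears, contradiction.

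The main obstacle I anticipate is the sub-case where the forward direction uses (a) ($c_k<_{H_2}c_m$ plus a shared written object) and the reverse direction uses (c) ($r_m(x,v)<_{H_2}c_k$ with $x\in\wset{T_k}$). Here the chained inequalities $r_m <_{H_2} c_m$ (well-formedness of $T_m$) and $r_m <_{H_2} c_k <_{H_2} c_m$ are internally consistent, so no contradiction falls out purely from commit orderings. To close it I would have to exploit the witnesses of the \emph{forward} (a) direction together with well-formedness on $T_k$'s side: the shared written object $y$ gives another pair of writes in both transactions, and since $T_k$ and $T_m$ both write $y$ with $T_k$ committing first in $H_2$, I would try to force a symmetric read event on $T_k$'s side (or appeal to the implicit acyclicity that comes from $H_2$ being valid/legal) to close the cycle.
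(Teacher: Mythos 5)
Your proof is incomplete, and the obstacle you flag at the end is not a technical hurdle to be worked around --- it is fatal to the transaction-level reading you have adopted. At the level of transactions the conflict relation is not antisymmetric, so the coexistence of $T_k \prec_{H2}^{CO} T_m$ and $T_m \prec_{H2}^{CO} T_k$ that you are trying to contradict is simply consistent. Concretely, let $T_m$ have events $r_m(x,0)\, w_m(x,2)\, c_m$ and $T_k$ have events $w_k(x,1)\, c_k$, and let $H2 = r_m(x,0)\ w_k(x,1)\ c_k\ w_m(x,2)\ c_m$. Clause (a) gives $T_k \prec_{H2}^{CO} T_m$ (both write $x$, $c_k <_{H2} c_m$) while clause (c) gives $T_m \prec_{H2}^{CO} T_k$ (since $r_m(x,0) <_{H2} c_k$ and $x$ is written by $T_k$), both inside the single well-formed, valid, even legal history $H2$; no contradiction is available, and neither validity nor legality (which the lemma does not assume anyway) can be invoked to exclude it. Worse, reordering the same events as $H1 = r_m(x,0)\ w_m(x,2)\ c_m\ w_k(x,1)\ c_k$ yields $\prec_{H1}^{CO} = \{(T_m,T_k)\}$, which is a \emph{strict} subset of $\prec_{H2}^{CO} = \{(T_k,T_m),(T_m,T_k)\}$ (the conflicts involving $T_0$ coincide in both), so under your reading the lemma itself is false. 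Your proposed repairs --- forcing a symmetric read on $T_k$'s side, or appealing to acyclicity from validity --- cannot close this, because there is nothing to close.

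The paper's own proof sidesteps all of this by treating $\prec_{H}^{CO}$ as a relation on the two \emph{distinguished operations} of each clause ($c_k$ versus $c_m$, $c_k$ versus $r_m$, $r_k$ versus $c_m$) rather than on transactions; that reading is also the one the subsequent lemma depends on (its proof infers the event ordering $c_i <_{H2} r_j(x,v)$ from equality of conflict orders, which is sound only operation-wise). At the operation level the argument is three lines: a conflicting pair $(p,q)$ satisfies the same structural conditions in $H1$ (same events), and $<_{H1}$ totally orders $p$ and $q$, so $(p,q)\notin\prec_{H1}^{CO}$ forces $(q,p)\in\prec_{H1}^{CO}\subseteq\prec_{H2}^{CO}$, hence $q <_{H2} p$, contradicting $p <_{H2} q$ which follows from $(p,q)\in\prec_{H2}^{CO}$. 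Your ``trichotomy'' and ``symmetric clause'' observations are this argument in disguise; the missing idea is that the objects being compared must be the ordered pairs of conflicting operations, for which antisymmetry is inherited directly from $<_{H}$ being a total order, eliminating the $3\times 3$ case analysis entirely.
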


\begin{proof}
Here, we have that $\prec_{H1}^{CO} \subseteq \prec_{H2}^{CO}$. In order to prove $\prec_{H1}^{CO} = \prec_{H2}^{CO}$, we have to show that $\prec_{H2}^{CO} \subseteq \prec_{H1}^{CO}$. We prove this using contradiction. Consider two events $p,q$ belonging to transaction $T1,T2$ respectively in $H2$ such that $(p,q) \in \prec_{H2}^{CO}$ but $(p,q) \notin \prec_{H1}^{CO}$. Since the events of $H2$ and $H1$ are same, these events are also in $H1$. This implies that the events $p, q$ are also related by $CO$ in $H1$. Thus, we have that either $(p,q) \in \prec_{H1}^{CO}$  or $(q,p) \in \prec_{H1}^{CO}$. But from our assumption, we get that the former is not possible. Hence, we get that $(q,p) \in \prec_{H1}^{CO} \Rightarrow (q,p) \in \prec_{H2}^{CO}$. But we already have that $(p,q) \in \prec_{H2}^{CO}$. This is a contradiction. \qed
\end{proof}

\begin{lemma}
\label{lem:eqv-legal}
Let $H1$ and $H2$ be two sequential histories which are equivalent to each other and their conflict order are the same, i.e. $\prec_{H1}^{CO} = \prec_{H2}^{CO}$. Then $H1$ is \legal{} iff $H2$ is \legal. 
\end{lemma}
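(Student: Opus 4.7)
The plan is to prove one direction (say $H_1$ legal implies $H_2$ legal) and appeal to symmetry for the other. Since $H_1 \approx H_2$, both histories contain the same set of events, so in particular the same successful read operations $r_k(x,v)$ and the same commit events of writing transactions. The strategy is to fix an arbitrary successful read $r_k(x,v)$ in $H_2$, exploit legality of $H_1$ to identify a committed transaction $T_j$ with $w_j(x,v) \in \evts{T_j}$ whose commit is $\lwrite{r_k(x,v)}{H_1}$, and then use the equality of conflict orders to show that $c_j$ is also $\lwrite{r_k(x,v)}{H_2}$.

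First I would show that $c_j <_{H_2} r_k(x,v)$. In $H_1$, since $c_j$ precedes $r_k(x,v)$ with $x \in \wset{T_j}$, the pair $(c_j, r_k)$ is a c-r conflict, so $(c_j, r_k) \in \prec_{H_1}^{CO} = \prec_{H_2}^{CO}$. By the definition of c-r order this forces $c_j <_{H_2} r_k(x,v)$.

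Next, suppose for contradiction that in $H_2$ there is some transaction $T_m \neq T_j$ with $x \in \wset{T_m}$ and $c_j <_{H_2} c_m <_{H_2} r_k(x,v)$; in other words, that the lastWrite of $r_k$ in $H_2$ is different from $c_j$. Then in $H_2$ we have the c-c conflict $(c_j, c_m) \in \prec_{H_2}^{CO}$ and the c-r conflict $(c_m, r_k) \in \prec_{H_2}^{CO}$. Transporting these back to $H_1$ via the hypothesis $\prec_{H_1}^{CO} = \prec_{H_2}^{CO}$ yields $c_j <_{H_1} c_m <_{H_1} r_k(x,v)$ with $x \in \wset{T_m}$, contradicting $\lwrite{r_k(x,v)}{H_1} = c_j$. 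Hence $c_j = \lwrite{r_k(x,v)}{H_2}$, and since $w_j(x,v) \in \evts{T_j}$, the read $r_k(x,v)$ is legal in $H_2$. Running the same argument for every successful read yields legality of $H_2$; the reverse implication is obtained by swapping the roles of $H_1$ and $H_2$.

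I expect the proof to be largely mechanical once the bookkeeping is set up; the only mild subtlety is to make sure the case analysis covers the possibility that $T_j$'s role in $H_2$ is taken by a different writer $T_m$, and that the c-c and c-r conflicts needed in the contradiction step are both legitimately induced (which requires $T_j$ and $T_m$ to be committed, something guaranteed because $c_j$ and $c_m$ are events that already appear in the common event set of both histories).
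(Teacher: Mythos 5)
Your proof is correct and takes essentially the same route as the paper's: both transport the c-r conflict across $\prec_{H1}^{CO} = \prec_{H2}^{CO}$ to show the relevant commit still precedes the read in the other history, and then use a c-c conflict to rule out an intervening writer of $x$. The only difference is cosmetic --- you argue forward that the lastWrite of $H1$ is preserved in $H2$, while the paper argues by contradiction from an assumed illegal read in $H2$; the underlying contradiction is the same.
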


\begin{proof}
It is enough to prove the `if' case, and the `only if' case will follow from symmetry of the argument. Suppose that $H1$ is \legal{}. By contradiction, assume that $H2$ is not \legal, i.e., there is a read \op{} $r_j(x,v)$ (of transaction $T_j$) in $H2$ with its \lastw{} as $c_k$ (of transaction $T_k$) and $T_k$ writes $u \neq v$ to $x$, i.e. $w_k(x, u) \in \evts{T_k}$.  Let $r_j(x,v)$'s \lastw{} in $H1$ be $c_i$ of $T_i$. Since $H1$ is legal, $T_i$ writes $v$ to $x$, i.e. $w_i(x, v) \in \evts{T_i}$. 

Since $\evts{H1} = \evts{H2}$, we get that $c_i$ is also in $H2$, and $c_k$ is also in $H1$.  As $\prec_{H1}^{CO} = \prec_{H2}^{CO}$, we get $c_i <_{H2} r_j(x, v)$ and $c_k <_{H1} r_j(x, v)$. 

Since $c_i$ is the \lastw{} of $r_j(x,v)$ in $H1$ we derive that $c_k <_{H1} c_i$ and, thus, $c_k <_{H2} c_i <_{H2} r_j(x, v)$. But this contradicts the assumption that $c_k$ is the \lastw{} of $r_j(x,v)$ in $H2$. Hence, $H2$ is legal. \qed
\end{proof}

\begin{lemma}
\label{lem:multi-co}
If a sequential history $H$ is \multv{} then $H$ is not in \coopty. Formally, $\langle (H \text{is sequential}) \land (H \text{is \multv}) \implies (H \notin \text{\coopty}) \rangle$. 
\end{lemma}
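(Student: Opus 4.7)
The plan is to argue by contradiction: suppose $H$ is sequential and \multv{} but $H \in \coopty$. Then, by definition, there exists a t-sequential \legal{} history $S$ with $S \approx \overline{H^o}$ that respects $\prec_H^{CO}$. My strategy is to use the two preceding lemmas to transport \legality{} from $S$ back to $\overline{H^o}$, and then transport the failure of \legality{} forward from $H$ to $\overline{H^o}$, obtaining a contradiction.

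First I would observe that the conflict order is insensitive to the opq-completion: the events added in passing from $H$ to $\overline{H^o}$ are only $\trya$ invocations/responses (plus $A$-returning responses to hanging operations and $ok/A$ responses to hanging $\tryc$'s). None of these events can introduce a new c-c, c-r, or r-w relation, since those are defined using $c_j$, writes in $\wset{\cdot}$, and successful reads (whose response values $\neq A$ were already present in $H$). Hence $\prec_H^{CO} = \prec_{\overline{H^o}}^{CO}$, and the hypothesis ``$S$ respects $\prec_H^{CO}$'' becomes $\prec_{\overline{H^o}}^{CO} \subseteq \prec_S^{CO}$. Since $S \approx \overline{H^o}$, \lemref{co-eq} gives $\prec_S^{CO} = \prec_{\overline{H^o}}^{CO}$, and then \lemref{eqv-legal} applied to this equivalent pair with identical conflict orders yields that $\overline{H^o}$ is \legal{} (because $S$ is).

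On the other hand, $H$ being \multv{} provides a successful read $r_k(x,v)$ which is \valid{} in $H$ but whose $\lwrite{r_k(x,v)}{H} = c_i$ is of some $T_i$ with $w_i(x,u) \in \evts{T_i}$ for some $u \neq v$. I would then check that both \validity{} and the identity of the \lastw{} are preserved in $\overline{H^o}$: the witness $T_j$ with $w_j(x,v)$ and $c_j$ from \validity{} in $H$ is still committed in $\overline{H^o}$ (opq-completion never removes commits or adds new ones), and $c_i$ remains the latest commit event writing $x$ that precedes $r_k(x,v)$, since only $\trya$ events are inserted. Thus $r_k(x,v)$ is \valid{} but not \legal{} in $\overline{H^o}$, contradicting the previous paragraph.

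The main obstacle I anticipate is the conflict-order-preservation step: writing down carefully that opq-completion adds only abort-like events and that every clause of the definition of $\prec^{CO}$ (c-c, c-r, r-w) filters out exactly such events, so no relation is created or destroyed. Once that bookkeeping is in place, the two lemmas slot in mechanically and the contradiction with \multv{}ness of $H$ is immediate.
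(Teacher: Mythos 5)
Your proposal is correct and follows essentially the same route as the paper: a proof by contradiction that combines \lemref{co-eq} (to upgrade $\prec^{CO}$-containment to equality) with \lemref{eqv-legal} (to transfer \legality{}) against the assumption that $H$ is \valid{} but not \legal. The only difference is that you carefully track the distinction between $H$ and $\overline{H^o}$ (showing the conflict order and the non-\legal{} read survive the opq-completion), a bookkeeping step the paper's proof silently elides by applying the two lemmas to $H$ and $S$ directly.
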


\begin{proof}
We prove this using contradiction. Assume that $H$ is \multv{}, i.e. $H$ is \valid{} but not \legal. But suppose that $H$ is in \coopty. Since $H$ is sequential, conflict order can be applied on it. From the definition of \coopty, we get that there exists a \tseq{} and legal history $S$ such that $\prec_{H}^{CO} \subseteq \prec_{S}^{CO}$. From \lemref{co-eq}, we get that $\prec_{H}^{CO} = \prec_{S}^{CO}$. Combining this with \lemref{eqv-legal} and the assumption that $H$ is not legal, we get that $S$ is not legal. But this contradicts out assumption that $S$ legal. Hence, $H$ is not in \coopty. \qed
\end{proof}

\subsection{Multi-Version Conflict Definition}
\label{subsec:mvc-defn}

Having seen the shortcomings of \coopty, we will see how to overcome them. The main reason for the shortcoming is because conflict notion has been defined only among the events of sequential histories. We address this issue here by defining a new conflict notion for \nseq{} histories. 

To define this notion on any history, we have developed a another definition of completion of any history $H$, \emph{mvc-completion} denoted as $\overline{H^m}$. It is same as $\overline{H^o}$ except for \stref{camb} which is modified as follows: for every incomplete $\tryc_k$ operation where $T_k$ is in $H$, insert response event $A$ somewhere after the invocation of $\tryc_k$. Thus in $\overline{H^m}$, all incomplete \tryc{} \op{s} are treated as aborted.
%now define a new conflict notion in the next sub-section that will accommodate \multv{} histories as well.

\begin{definition}
\label{defn:mvco}
For a history $H$, we define \textit{\fmvc(\mvco)}, denoted as $\prec^{\mvc}_{H}$, between \op{s} of $\overline{H^{m}}$ as follows: (a) commit-commit (c-c) order: $c_i \prec^{\mvc}_{H} c_j$ if~ $\rsp{\tryc_i}{ok} <_{H} \rsp{\tryc_j}{ok}$ for two committed transaction $T_i$, $T_j$ and both of them write to $x$; (b) commit-read (c-r) order: Let $r_i(x, v)$ be a read \op{} in $H$ with its \vwrite{} as $c_k$ (belonging to the committed transaction $T_k$). Then for any committed transaction $T_j$ that writes to $x$, either the response of the $T_j$'s commit occurs before $T_k$ or $T_k$ is same as $T_j$, formally $(\rsp{\tryc_j}{ok} <_H \rsp{\tryc_k}{ok}) \lor (T_j = T_k)$, we define $c_j \prec^{\mvc}_{H} r_i$. (c) read-commit (r-c) order: Let $r_i(x, v)$ be a read \op{} in $H$ with its \vwrite{} as $c_k$. Then for any committed transaction $T_j$ that writes to $x$, if the $T_j$'s commit response event occurs after $T_k$'s commit response event, i.e. $(\rsp{\tryc_k}{ok} <_H \rsp{\tryc_j}{ok})$, we define $r_i \prec^{\mvc}_{H} c_j$.
\end{definition}

\cmnt {
\begin{definition}
\label{defn:mvco}
For a history $H$, we define \textit{\fmvc(\mvco)}, denoted as $\prec^{\mvc}_{H}$, between \op{s} of $\overline{H_{mvc}}$ as follows: (a) commit-commit (c-c) order: $c_i \prec^{\mvc}_{H} c_j$ if $c_i <_H c_j$ for two committed transaction $T_i$, $T_j$ and both of them write to $x$. (b) commit-read (c-r) order: Let $r_i(x, v)$ be a read \op{} in $H$ with its \vwrite{} as $c_j$ (belonging to the committed transaction $T_j$). Then for any committed transaction $T_k$ that writes to $x$ and either commits before $T_j$ or is same as $T_j$, formally $(c_k <_H c_j) \lor (c_k = c_j)$ , we define $c_k \prec^{\mvc}_{H} r_i$. (c) read-commit (r-c) order: Let $r_i(x, v)$ be a read \op{} in $H$ with its \lastw{} as $c_j$ (belonging to the committed transaction $T_j$). Then for any committed transaction $T_k$ that writes to $x$ and commits after $T_j$, i.e. $c_j <_H c_k$, we define $r_i \prec^{\mvc}_{H} c_k$.
\end{definition}
}

Observe that the \mvco{} is defined on the \op{s} (and not events) of $\overline{H^m}$ and not $H$. The set of conflicts in $H\histref{nseq}$ are: $[\text{c-r}: (c_0, r_3), (c_1, r_3)], [\text{r-c}: (r_3, c_2)], [\text{c-c}: (c_0, c_1), (c_0, c_2), (c_1, c_2)]$. Here, it can be observed that $\rsp{\tryc_2}{ok}$ occurs before $\rsp{r_3(x)}{5}$. Yet, $r_3$ occurs before $c_2$ in the \mvco. 

It is not difficult to extend the \mvco{} to sequential histories: replace the response of a \tryc{} event with the corresponding \tryc{} \op{} and the response of a read event with the corresponding read \op. The set of conflicts in $H\histref{illus}$ are: $[\text{c-r}: (c_0, r_1(x,0)), (c_0, r_1(y))], [\text{r-c}: (r_1(x), c_2), (r_1(y), c_2)], [\text{c-c}: (c_0, c_2)]$. 

We say that a history $H'$ \textit{satisfies} the \mvco{} of a history $H$, $\prec_H^{\mvc}$, denoted as  $H' \vdash \prec_H^{\mvc}$ if: (1) $H'$ is equivalent to $\overline{H^m}$; (2) Consider two \op{s} $op_i, op_j$ in $H$. Let $e_i, e_j$ be the corresponding response events of these \op{s}. Then, $op_i \prec^{\mvc}_{H} op_j$ implies $e_i <_{H'} e_j$. If $H, H'$ are sequential, then $op$ and $e$ would be the same. 

Note that for any sequential history $H$ that is \multv{}, $H$ does not satisfy its own \mvco{} $\prec^{\mvc}_{H}$. For instance the \multv{} order in history $H\histref{illus}$ consists of the pair: $(r_1(y, 0), c_2)$. But $c_2$ occurs before $r_1(y, 0)$ in $H1$. We formally prove this property using the following lemmas.

\cmnt {
\begin{lemma}
\label{lem:satisfy-valid}
Consider a (possibly \multv) sequential and \valid{} history $H$. Let $H'$ be a sequential history which satisfies $\prec^{\mvc}_{H}$. Then $H'$ is \valid{} and $\prec^{\mvc}_{H'} = \prec^{\mvc}_{H}$. Formally, $\langle (H \text{ is \valid}) \land (H, H' \text{ are sequential}) \land (H' \vdash \prec^{\mvc}_{H}) \implies (H' \text{ is \valid}) \land (\prec^{\mvc}_{H'} = \prec^{\mvc}_{H}) \rangle $.
\end{lemma}

\begin{proof}
Here, we have that $H$ is \valid{} and $H'$ satisfies $\prec^{\mvc}_{H}$. Thus all the events of $H$ and $H'$ are the same. The definition of satisfaction says that the events of $H'$ are ordered according to \mvco{} order of $H$. Thus, it can be verified that all the \mvco{s} of both histories are the same, i.e. $\prec^{\mvc}_{H'} = \prec^{\mvc}_{H}$. Since $H$ is \valid{} and $H'$ has the same c-r \mvco{} as $H$, the \vwrite{} of all the read \op{s} in $H'$ occur before the corresponding reads in $H'$. Hence $H'$ is \valid{} as well. \qed
\end{proof}
}

\begin{lemma}
\label{lem:mvco-legal}
Consider a \valid{} history $H$. Let $H'$ be a sequential history (which could be same as $H$). If $H'$ satisfies $\prec^{\mvc}_{H}$ then $H'$ is \legal. Formally, $\langle (H \text{ is \valid}) \land (H' \text{ is sequential}) \land (H' \vdash \prec^{\mvc}_{H}) \implies (H' \text{ is \legal}) \rangle $.
\end{lemma}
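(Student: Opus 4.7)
The plan is to argue by contradiction. Assume $H$ is valid, $H'$ is sequential with $H' \vdash \prec^{\mvc}_{H}$, yet $H'$ is \emph{not} \legal. Then there is a successful read $r_i(x,v)$ in $H'$ whose \lastw{} in $H'$ is some commit $c_k$ of a transaction $T_k$ that writes a value $u \neq v$ to $x$ (here $T_k$ could be $T_0$ if no other committed writer on $x$ precedes $r_i$ in $H'$, but the argument is uniform). Since $H$ is \valid{} and $r_i$ is successful, $r_i$ has a \vwrite{} $c_j$ in $H$, where $T_j$ is committed in $H$ with $w_j(x,v) \in \evts{T_j}$. Because $w_j$ writes $v$ and $w_k$ writes $u \neq v$, and each transaction touches $x$ at most once, we must have $T_j \neq T_k$.

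The key step is to exploit the three clauses of \defref{mvco} applied to the three commits $c_0, c_j, c_k$ (all of which write $x$ and are committed) together with the read $r_i$. I would split into two cases based on the order of the response events of $\tryc_j$ and $\tryc_k$ in $H$.

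Case 1: $\rsp{\tryc_k}{ok} <_H \rsp{\tryc_j}{ok}$. Then the c-c clause gives $c_k \prec^{\mvc}_H c_j$, and the c-r clause (with $T_m = T_j$, i.e.\ $T_m = T_k$'s role played by $T_j$ itself) gives $c_j \prec^{\mvc}_H r_i$. Translating through $H' \vdash \prec^{\mvc}_H$ yields $c_k <_{H'} c_j <_{H'} r_i$. But then $c_j$ is a commit on $x$ strictly between $c_k$ and $r_i$ in $H'$, contradicting that $c_k$ is the \lastw{} of $r_i$ in $H'$.

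Case 2: $\rsp{\tryc_j}{ok} <_H \rsp{\tryc_k}{ok}$. Then the c-r clause (with $T_j$ as the \vwrite{} transaction) gives $c_j \prec^{\mvc}_H r_i$, while the r-c clause applied to the other writer $T_k$ whose commit response comes after $c_j$'s gives $r_i \prec^{\mvc}_H c_k$. Satisfaction by $H'$ forces $c_k$ to occur after $r_i$ in $H'$, which directly contradicts $c_k$ being $r_i$'s \lastw{} in $H'$ (the latter requires $c_k <_{H'} r_i$).

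Both cases give contradictions, so $H'$ must be \legal. The main obstacle I anticipate is just bookkeeping: ensuring the initializing transaction $T_0$ and the case where $r_i$ has no non-$T_0$ writer before it are handled by the same argument, and that the translation from $\prec^{\mvc}_H$ to positions in $H'$ (which involves response events rather than \op{s} in the \nseq{} case) is applied correctly. Once it is observed that $H'$ is sequential, so responses and \op{s} coincide in $H'$, both cases reduce cleanly to the $<_{H'}$ statements above.
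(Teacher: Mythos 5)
Your proposal is correct and follows essentially the same route as the paper's own proof: contradiction via a non-legal read whose \lastw{} and \vwrite{} differ, a case split on the order of the two commit responses in $H$, and in each case using the c-c, c-r, or r-c clauses of the \mvco{} together with satisfaction by $H'$ to contradict the \lastw{} assumption. No gaps.
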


\begin{proof}
Assume that $H'$ is not \legal. Hence there exists a read \op{}, say $r_i(x, v)$, in $\evts{H'}$ that is not \legal. This implies that \lastw{} of $r_i$ is not the same as its \vwrite. Let $c_l = \lwrite{r_i}{H'} \neq \valw{r_i}{H'} = c_v$. Let $w_l(x,u) \in \evts{T_l}$ and $w_v(x,v) \in \evts{T_v}$ where $\{T_l, T_v, T_i\} \in \txns{H'}$. As $H$ is \valid, we have that $\rsp{\tryc_v}{ok} <_H \rsp{r_i(x)}{v}$. Since $H' \vdash \prec^{\mvc}_{H}$, we have that $\evts{H} = \evts{H'}$. Thus $\{T_l, T_v, T_i\}$ are also in $\txns{H}$. 

There are two cases w.r.t ordering of events in $H$: %(i) $\rsp{\tryc_l}{ok} <_H \rsp{\tryc_l}{ok}$ (ii) $\rsp{\tryc_l}{ok} <_H \rsp{\tryc_l}{ok}$

\begin{itemize}
\item $\rsp{\tryc_l}{ok} <_H \rsp{\tryc_v}{ok}$: From the definition of \mvco, we get that $\rsp{\tryc_l}{ok} \prec^{\mvc}_H \rsp{\tryc_v}{ok}$. Since $H'$ satisfies $\prec^{\mvc}_H$ and is sequential, we get that $c_l <_{H'} c_v <_{H'} r_i$. 

\item $\rsp{\tryc_v}{ok} <_H \rsp{\tryc_l}{ok}$: Again, from the definition of \mvco, we get that $\rsp{r_i(x)}{v} \\ 
\prec^{\mvc}_H \rsp{\tryc_l}{ok}$. Since $H'$ satisfies $\prec^{\mvc}_H$ and is sequential, we get that $c_v <_{H'} r_i <_{H'} c_l$. 
\end{itemize}

In both cases, it can be seen that $c_l$ is not the previous closest commit \op{} to $r_i$ in $H'$. Hence, we have a contradiction which implies $H'$ is \legal. \qed
\end{proof}

\noindent Using this lemma, we get the following corollary, 

\begin{corollary}
\label{cor:mltv-mvc}
Consider a \valid{} history $H$. Let $H'$ be a \multv{} history equivalent to $H$ (which could be same as $H$). Then, $H'$ does not satisfy $\prec^{\mvc}_{H}$. Formally, $\langle (H' \text{ is \multv}) \land (H \text{ is \valid}) \land (H' \approx H) \implies (H' \nvdash \prec^{\mvc}_{H}) \rangle $.
\end{corollary}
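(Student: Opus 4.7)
The plan is to derive the corollary by a short contradiction argument that uses \lemref{mvco-legal} as a black box. Suppose, for contradiction, that $H'$ does satisfy $\prec^{\mvc}_{H}$. Since $H'$ is \multv{}, by the very definition of \multv{} (which, as noted in the text, is only defined on sequential histories), $H'$ is sequential, and furthermore $H'$ is \valid{} but \textbf{not} \legal.

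Next I would invoke \lemref{mvco-legal} with this $H$ and $H'$: its hypotheses are exactly that $H$ is \valid, $H'$ is sequential, and $H' \vdash \prec^{\mvc}_{H}$, all of which are in hand (validity of $H$ is given, sequentiality of $H'$ follows from $H'$ being \multv, and satisfaction is our contradiction assumption). The equivalence $H' \approx H$ is implicit in the definition of $H' \vdash \prec^{\mvc}_{H}$, so it fits naturally. The conclusion of \lemref{mvco-legal} is that $H'$ is \legal, which directly contradicts the fact that $H'$ is \multv{} and therefore not \legal{}. This contradiction gives $H' \nvdash \prec^{\mvc}_{H}$, as desired.

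There is essentially no obstacle here since the hard work has already been done in \lemref{mvco-legal}; the only subtlety I would flag is making sure the reader sees why $H'$ is automatically sequential (because \multv{} is only defined for sequential histories), and why the equivalence condition $H' \approx H$ plays no additional role beyond what the definition of $H' \vdash \prec^{\mvc}_{H}$ already provides. The whole argument should take only a few lines and can be phrased either as the contradiction above or as a direct contrapositive of \lemref{mvco-legal} specialized to the case where $H'$ is \valid{} but not \legal.
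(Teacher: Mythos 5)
Your proposal is correct and matches the paper's proof: both arguments observe that a \multv{} history is sequential but not \legal{} and then apply \lemref{mvco-legal} in contrapositive form (the paper states this directly; you phrase the same step as a contradiction, which you yourself note is equivalent). No gaps.
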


\begin{proof}
We are given that $H$ is \valid, $H$ and $H'$ are equivalent to each other. Since $H'$ is \multv, we get that $H'$ is sequential but not \legal. Combining all these with the contrapositive of \lemref{mvco-legal}, we get that $H' \nvdash \prec^{\mvc}_{H}$. \qed
\end{proof}

\noindent Now, we show that if a history is \legal, then it satisfies it own \mvconflict{} order.

\begin{lemma}
\label{lem:legal-mvco}
Consider a \legal{} history $H$. Then, $H$ satisfies its own \mvconflict{} order $\prec^{\mvc}_{H}$. Formally, $\langle (H \text{ is \legal}) \implies (H \vdash \prec^{\mvc}_{H}) \rangle$.
\end{lemma}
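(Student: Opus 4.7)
The plan is to verify the two clauses in the definition of $H' \vdash \prec^{\mvc}_H$ with $H' = H$. Since \legality{} is only defined for sequential histories, $H$ is sequential, and (assuming $H$ is \tcomp{}, which we may take WLOG since $\overline{H^m}$ just closes off incomplete transactions in a way that does not affect the relevant ordering for already-complete reads and commits) we immediately get $H \approx \overline{H^m}$, which handles condition~(1). The work is therefore entirely in condition~(2): given $op_i \prec^{\mvc}_H op_j$, show $op_i <_H op_j$ in the sequential history $H$. We split into the three cases of \defref{mvco}.

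For the c-c case, $c_i \prec^{\mvc}_H c_j$ is \emph{defined} in a sequential history by $c_i <_H c_j$, so this direction is immediate. For the c-r case, suppose $c_j \prec^{\mvc}_H r_i$ because $r_i(x,v)$ has \vwrite{} $c_k$ and either $T_j = T_k$ or $\rsp{\tryc_j}{ok} <_H \rsp{\tryc_k}{ok}$. If $T_j = T_k$, then $c_j = c_k$; validity of $H$ (implied by legality) gives $c_k <_H r_i$, hence $c_j <_H r_i$. Otherwise $c_j <_H c_k$ and validity gives $c_k <_H r_i$, so by transitivity $c_j <_H r_i$.

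The interesting case is r-c, and this is where legality (as opposed to mere validity) is essential. Assume $r_i \prec^{\mvc}_H c_j$ because $c_k = \valw{r_i}{H}$, $T_j$ writes to $x$, and $c_k <_H c_j$. I must show $r_i <_H c_j$. Suppose for contradiction that $c_j <_H r_i$; then $c_k <_H c_j <_H r_i$ with $c_j$ being a commit of a transaction whose \Wset{} contains $x$. But legality of $H$ forces $\lwrite{r_i}{H} = \valw{r_i}{H} = c_k$, i.e.\ $c_k$ is the \emph{latest} commit in $H$ before $r_i$ of a transaction writing to $x$. The existence of $c_j$ strictly between $c_k$ and $r_i$ in $<_H$, with $x \in \wset{T_j}$, directly contradicts this maximality. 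Hence $r_i <_H c_j$, completing the case analysis and the lemma. The only subtle point is this last argument, where the gap between validity and legality is precisely what powers the r-c direction of the conflict order.
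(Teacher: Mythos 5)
Your proof is correct and follows essentially the same route as the paper's: a case analysis over the three \mvconflict{} types, with c-c immediate, c-r following from \validity, and the r-c case resolved by using \legality{} to rule out a writer's commit falling between the \vwrite{} and the read. The only difference is cosmetic (you argue c-c and c-r directly and only r-c by contradiction, and you explicitly note the equivalence $H \approx \overline{H^m}$, which the paper leaves implicit).
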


\begin{proof}
We are given that $H$ is \legal. From the definition of \legality, we get that $S$ is sequential. We will prove this lemma using contradiction. Suppose, $H$ does not satisfy its own \mvconflict{} order i.e. $(H \nvdash \prec^{\mvc}_{H})$. Consider two \op{s}, say $p_i$ (belonging to transaction $T_i$) and $q_j$ (belonging to transaction $T_j$) in $\evts{H}$. From our assumption of contradiction, we get that $(p_i \prec_H^{mvc} q_j)$ but $(p_i \nless_H q_j)$. This implies that $(q_j <_H p_i)$ since all the \op{s} are totally ordered in $H$ (which is sequential). Let us consider the various cases of \mvconflict{} between $p_i$ and $q_j$:

\begin{itemize}
\item $p_i=c_i, q_j=c_j$ (c-c order): From \mvconflict{} definition, we get that $c_i \prec^{\mvc}_{H} c_j$ implies that $c_i <_H c_j$. 
\item $p_i=c_i, q_j=r_j$ (c-r order): Let the \vwrite{} of $r_j$ in $H$ be $c_v$ belonging to transaction $T_v$. From \mvconflict{} definition, we get that either $c_i <_H c_v <_H r_j$ or $c_i = c_v <_H r_j$. In either case, we have that $c_i <_H r_j$. 
\item $p_i=r_i, q_j=c_j$ (r-c order): Similar to the above case, Let the \vwrite{} of $r_i$ in $H$ be $c_v$ belonging to transaction $T_v$. From \mvconflict{} definition, we have two option: (i) $c_v <_H c_i <_H r_j$ or (ii) $c_v <_H r_j <_H c_i$. Since $H$ is \legal, option (i) is not possible (unless $c_v = c_i$). This leaves us with option (ii), $r_j <_H c_i$. 
\end{itemize}

Thus in all the three cases, we get that $(p_i <_H q_j)$ which implies that $H$ satisfies $\prec^{\mvc}_{H}$. \qed
\end{proof}

\noindent We now prove an interesting property about satisfaction relation.

\begin{lemma}
\label{lem:satisfy-mvcsubset}
Consider a \valid{} history $H$ and a sequential history $S$. If, $S$ satisfies $H$'s \mvconflict{} order $\prec^{\mvc}_{H}$ then $S$ also respects $H$'s \mvconflict{} order. Formally, $\langle (H \text{ is \valid}) \land (S \text{ is sequential}) \land (S \vdash \prec^{\mvc}_{H}) \implies (\prec^{\mvc}_{H} \subseteq \prec^{\mvc}_{S}) \rangle$.
\end{lemma}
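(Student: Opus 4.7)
The plan is to show that every pair in $\prec^{\mvc}_H$ is also a pair in $\prec^{\mvc}_S$ by handling each of the three cases (c-c, c-r, r-c) of \defref{mvco} separately, leaning on the fact that $S$ and $\overline{H^m}$ have identical events.

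First I would record the structural consequences of $S \vdash \prec^{\mvc}_H$: since $\evts{S} = \evts{\overline{H^m}}$, the two histories agree on the set of transactions, on which of them are committed, on their write sets, and on the values written and read. Combined with the standing assumption that all written values are unique, this ensures that for every successful read $r_i(x,v)$ there is a unique transaction writing $v$ to $x$, so the candidate \vwrite{} of $r_i$ is the same object in $H$ and in $S$.

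Next I would verify that $S$ itself is \valid{} and that \vwrite{s} are preserved. Fix a successful read $r_i(x,v)$ in $S$ and let $c_k = \valw{r_i}{H}$. Applying the c-r clause of \defref{mvco} to $H$ with the choice $T_j = T_k$ gives $c_k \prec^{\mvc}_H r_i$, so by satisfaction the response of $c_k$ precedes the response of $r_i$ in $S$; since $S$ is sequential this forces $c_k <_S r_i$. As $T_k$ is committed in $S$ and wrote $v$ to $x$, the read $r_i$ is \valid{} in $S$, and uniqueness of written values shows $\valw{r_i}{S} = c_k$.

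Now the three cases. For c-c: if $c_i \prec^{\mvc}_H c_j$, then $T_i, T_j$ are both committed and both write to some common $x$; satisfaction gives $c_i <_S c_j$, hence $c_i \prec^{\mvc}_S c_j$. For c-r: suppose $c_j \prec^{\mvc}_H r_i$ with $\valw{r_i}{H} = c_k$. If $T_j = T_k$ we are immediately done, using $\valw{r_i}{S} = c_k$. Otherwise $\rsp{\tryc_j}{ok} <_H \rsp{\tryc_k}{ok}$, and since $T_j, T_k$ both write to $x$ we have $c_j \prec^{\mvc}_H c_k$, which by satisfaction lifts to $c_j <_S c_k$, giving $c_j \prec^{\mvc}_S r_i$. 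The r-c case is symmetric: an order $r_i \prec^{\mvc}_H c_j$ with $\valw{r_i}{H} = c_k \neq T_j$ yields $\rsp{\tryc_k}{ok} <_H \rsp{\tryc_j}{ok}$, hence $c_k \prec^{\mvc}_H c_j$, hence $c_k <_S c_j$ by satisfaction; together with $\valw{r_i}{S} = c_k$ this gives $r_i \prec^{\mvc}_S c_j$.

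The main obstacle I expect is the \vwrite{} preservation step: one must use satisfaction of the c-r mvco (with the degenerate $T_j = T_k$ choice) to transport the \vwrite{} relation from $H$ to $S$, and for this it is essential that $S$ be sequential so that $c_k <_S r_i$ follows from the ordering of their response events. Once that preservation is established, the three cases reduce, via the c-c mvco in $H$, to an immediate appeal to the satisfaction hypothesis.
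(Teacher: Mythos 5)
Your proof is correct, but it takes a genuinely different route from the paper's. The paper's proof is indirect: it first invokes \lemref{mvco-legal} to conclude that $S$ is \legal, then \lemref{legal-mvco} to conclude $S \vdash \prec^{\mvc}_{S}$, and finally runs a short contradiction argument on a pair $p_i \prec^{\mvc}_{H} q_j$ with $p_i \nprec^{\mvc}_{S} q_j$, chaining the two satisfaction relations to derive $p_i <_S q_j$ and $p_i \nless_S q_j$ simultaneously. You instead verify the inclusion $\prec^{\mvc}_{H} \subseteq \prec^{\mvc}_{S}$ clause by clause, with the \vwrite-preservation step (obtained by instantiating the c-r clause with $T_j = T_k$ and using uniqueness of written values) doing the work that the paper delegates to its two auxiliary lemmas. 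What your approach buys is self-containedness and, arguably, rigor: the paper's first implication --- passing from $p_i \nprec^{\mvc}_{S} q_j$ to $p_i \nless_S q_j$ via $S \vdash \prec^{\mvc}_{S}$ --- implicitly uses that \mvconflict{ing} pairs are totally ordered in one direction or the other in $S$, a fact it never states; your explicit case analysis on the c-c, c-r and r-c clauses supplies exactly the reasoning that justifies this. What the paper's approach buys is brevity and reuse of already-proved lemmas, avoiding a second traversal of the conflict cases. Both arguments hinge on the same two facts --- that satisfaction transports commit-order and \vwrite{} information from $\overline{H^m}$ to $S$, and that $S$ being sequential turns response-event order into \op{} order --- so the proofs are ultimately equivalent in substance even though their decompositions differ.
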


\begin{proof}
We are given that $H$ is \valid, $S$ is sequential and satisfies $H$'s \mvconflict{} order $\prec^{\mvc}_{H}$. Thus, from \lemref{mvco-legal} we get that $S$ is \legal. From \lemref{legal-mvco}, we get that $S$ satisfies its own \mvconflict{} order $\prec^{\mvc}_{H}$, i.e. $S \vdash \prec^{\mvc}_{S}$.

Now, we prove this lemma using contradiction. Suppose, $S$ satisfies $\prec^{\mvc}_{H}$ but $S$ does not respect \mvconflict{} order of $H$, i.e. $\prec^{\mvc}_{H} \nsubseteq \prec^{\mvc}_{S}$. This implies that there exists two \op{s}, $p_i, q_j$ in $H$ and $S$ such that $p_i$ precedes $q_j$ in $H$'s \mvc{} order but not in $S$'s \mvc{} order. We have that, \\
$(p_i \prec^{\mvc}_{H} q_j) \land (p_i \nprec^{\mvc}_{S} q_j) \xrightarrow[\text{satisfy def'n}] {S \vdash \prec^{\mvc}_{S}} (p_i \prec^{\mvc}_{H} q_j) \land (p_i \nless_{S} q_j) \xrightarrow{S \vdash \prec^{\mvc}_{H}} (p_i <_{S} q_j) \land (p_i \nless_{S} q_j)$. This implies a contradiction. Hence, we have that $S$ respects \mvconflict{} order of $H$. \qed

\end{proof}

\subsection{Multi-Version Conflict Opacity}
\label{sec:mvco}

We now illustrate the usefulness of the conflict notion by defining another subset of opacity \emph{\mvopty} which is a superset of \coopty. We formally define it as follows (along the same lines as \coopty): 

\begin{definition}
\label{def:mvcop}
A history $H$ is said to be \emph{multi-version conflict opaque} or \emph{\mvop} if $H$ is \valid{} and there exists a t-sequential history $S$ such that (1) $S$ is equivalent to $\overline{H^m}$, i.e. $S \approx \overline{H^m}$; (2) $S$ respects $\prec_{H}^{RT}$, i.e. $\prec_{H}^{RT} \subseteq \prec_{S}^{RT}$ and $S$ satisfies $\prec_{H}^{\mvc}$, i.e. $S \vdash \prec_{H}^{RT}$. 
\end{definition}

It can be seen that both the histories $H\histref{illus}$ and $H\histref{nseq}$ are \mvop. The \mvc{} equivalent \tseq{} history for $H\histref{illus}$ being $T_1T_2$ and the equivalent \tseq{} history for $H\histref{nseq}$ being $T_1T_3T_2$. 

Consider a history $H$ that is \mvop{} and let $S$ be the \mvc{} equivalent \tseq{} history. Then from \lemref{satisfy-mvcsubset}, we get that $S$ satisfies $H$'s \mvconflict{} order, i.e. $\prec^{\mvc}_{H} \subseteq \prec^{\mvc}_{S}$. Please note that we don't restrict $S$ to be \legal{} in the definition. But it turns out that if $H$ is \mvop{} then $S$ is automatically \legal{} as shown in \lemref{mvco-legal}. Now, we have the following theorem.

\begin{theorem}
\label{thm:mvcop-op}
If a history $H$ is \mvop, then it is also \opq. Formally, $\langle (H \in \mvopty) \implies (H \in \opty) \rangle$.
\end{theorem}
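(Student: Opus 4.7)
The plan is to take the very same witness history $S$ supplied by the mvc-opacity hypothesis and show that it already satisfies every clause of the definition of opacity. Unfolding the hypothesis, since $H \in \mvopty$, the history $H$ is \valid{} and there is a \tseq{} history $S$ with $S \approx \overline{H^m}$, $\prec_H^{RT} \subseteq \prec_S^{RT}$, and $S \vdash \prec_H^{\mvc}$.

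The first step will be to upgrade $S$ from merely satisfying $\prec_H^{\mvc}$ to being \legal. Since $S$ is \tseq{} it is in particular sequential, and $H$ is \valid, so \lemref{mvco-legal} applies directly and yields that $S$ is \legal. This is the one place where the new machinery does real work.

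The second (and only other) step is to reconcile $\overline{H^m}$ with $\overline{H^o}$. I will argue that the mvc-completion is a particular instance of the opq-completion. The two completion procedures are identical on \stref{ccomp} and on the middle clause for incomplete $read$/$write$/$\trya$ operations; they differ only in \stref{camb}, where the opq-completion allows each incomplete $\tryc_k$ to be completed by either $ok$ or $A$, whereas the mvc-completion forces $A$. Thus, by choosing response $A$ for every incomplete $\tryc_k$ in \stref{camb}, I recover $\overline{H^m}$ as a legitimate opq-completion of $H$; in other words there is an admissible $\overline{H^o}$ with $\overline{H^o} = \overline{H^m}$, and hence $S \approx \overline{H^o}$ for this choice.

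Combining everything, $S$ is a \tseq{} \legal{} history equivalent to some opq-completion $\overline{H^o}$ and respecting $\prec_H^{RT}$, while $H$ itself is \valid; these are exactly the requirements of \opty, so $H \in \opty$. I do not expect a real obstacle: the entire content is (i) the observation that the non-determinism in the opq-completion strictly subsumes that of the mvc-completion, so the same witness $S$ is reusable, and (ii) the appeal to \lemref{mvco-legal} that delivers legality of $S$ for free.
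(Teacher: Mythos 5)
Your proposal is correct and follows essentially the same route as the paper's proof: reuse the witness $S$ from the \mvopty{} hypothesis, invoke \lemref{mvco-legal} to obtain legality of $S$, and observe that $\overline{H^m}$ is an admissible $\overline{H^o}$ so the equivalence clause carries over. The only difference is that you spell out explicitly why $S \approx \overline{H^m}$ implies $S \approx \overline{H^o}$ (the opq-completion's nondeterminism in \stref{camb} subsumes the mvc-completion's forced $A$), a point the paper leaves at ``it can be seen.''
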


\begin{proof}
Since $H$ is \mvop, it follows that $H$ is \valid{} and there exists a \tseq{} history $S$ such that (1) $S$ is equivalent to $\overline{H^m}$ and (2) $S$ respects $\prec_{H}^{RT}$ and $S$ satisfies $\prec_{H}^{\mvc}$. Since, $S$ is equivalent to $\overline{H^m}$, it can be seen that $S$ is equivalent to $\overline{H^o}$ as well. This, in order to prove that $H$ is \opq, it is sufficient to show that $S$ is \legal. As $S$ satisfies $\prec_{H}^{\mvc}$, from \lemref{mvco-legal} we get that $S$ is \legal. Hence, $H$ is \opq{} as well. \qed 
\end{proof}

Thus, this lemma shows that \mvopty{} is a subset of \opty. Actually, \mvopty{} is a strict subset of \opty. Consider the history \hist{$H\histref{mvcsub} = r_1(x,0) r_2(z,0) r_3(z,0) w_1(x, 5) c_1 r_2(x, 5) w_2(x, 10) w_2(y, 15) \\
c_2 r_3(x, 5) w_3(y, 25) c_3$}. \label{hist:mvcsub}. \figref{nmvc} shows the representation of this history. 
The set of \mvconflict{s} in $H\histref{mvcsub}$ are (ignoring the conflicts with $c_0$): $[\text{c-r}: (c_1, r_2(x,5)), (c_1, r_3(x,5))], [\text{r-c}: (r_3(x,5), c_2)], [\text{c-c}: (c_1, c_2), (c_2, c_3)]$. It can be verified that $H\histref{mvcsub}$ is \opq{} with the equivalent \tseq{} history being $T_1 T_3 T_2$. But there is no \mvc{} equivalent \tseq{} history. This is because of the conflicts: $(r_3(x,5), c_2), (c_2, c_3)$. Hence, $H\histref{mvcsub}$ is not \mvop.

\begin{figure}[tbph]
\centerline{\scalebox{0.5}{\input{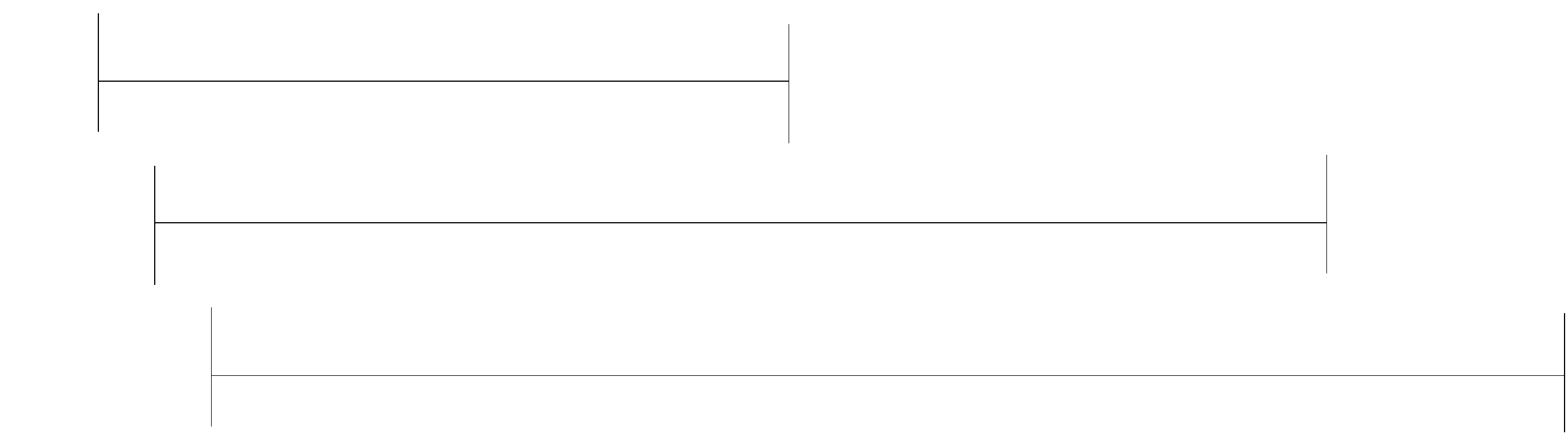_t}}}
\caption{Pictorial representation of $H\histref{mvcsub}$}
\label{fig:nmvc}
\end{figure}

Next, we will relate the classes \coopty{} and \mvopty. In the following theorem, we show that \coopty{} is a subset of \mvopty. 

\begin{theorem}
\label{thm:cop-mvcop}
If a history $H$ is \coop, then it is also \mvop. Formally, $\langle (H \in \coopty) \implies (H \in \mvopty) \rangle$. 
\end{theorem}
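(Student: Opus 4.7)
The plan is to reuse the t-sequential witness produced by the \coopty assumption. Suppose $H \in \coopty$. Then $H$ is sequential (the conflict order $\prec_H^{CO}$ is only defined on sequential histories), $H$ is \valid, and there exists a t-sequential \legal{} history $S$ with $S \approx \overline{H^o}$, $\prec_H^{RT} \subseteq \prec_S^{RT}$, and $\prec_H^{CO} \subseteq \prec_S^{CO}$. Since $H$ is sequential, every transaction is complete, so there are no incomplete $\tryc$ operations; hence the only step distinguishing $\overline{H^o}$ from $\overline{H^m}$ is vacuous, and $\overline{H^o} = \overline{H^m}$, giving $S \approx \overline{H^m}$ for free. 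The real-time-order requirement of \Defref{mvcop} is also immediate. So the only thing left to verify is that $S \vdash \prec_H^{\mvc}$.

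To this end I will first upgrade $H$ itself from \valid{} to \legal. By the contrapositive of \Lemref{multi-co}, a \coop{} sequential history cannot be \multv, so a \valid{} \coop{} sequential history is \legal. Consequently, for every read $r_i(x,v)$ in $H$, $\valw{r_i}{H}=\lwrite{r_i}{H}=c_k$, where $c_k$ is the temporally latest commit of an $x$-writer before $r_i$. I then argue that $\prec_H^{\mvc} \subseteq \prec_H^{CO}$ by splitting on the three conflict types. The c-c cases coincide verbatim. For a c-r pair $c_j \prec_H^{\mvc} r_i$ we have $c_j \leq c_k <_H r_i$ by legality, which is exactly the \coopty c-r pattern. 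For an r-c pair $r_i \prec_H^{\mvc} c_j$ we have $c_k <_H c_j$ with $T_j$ a committed $x$-writer; if $c_j$ were before $r_i$ it would violate $c_k$'s maximality, so $r_i <_H c_j$, giving a \coopty r-w pair. Then $\prec_H^{\mvc} \subseteq \prec_H^{CO} \subseteq \prec_S^{CO} \subseteq{<_S}$, so $S \vdash \prec_H^{\mvc}$, and $S$ is a valid \mvopty witness.

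The step I expect to be slightly delicate is the r-c case, because the \mvco definition links a read to later commits through the \emph{commit order of $x$-writers}, whereas \coopty's r-w link uses the \emph{temporal order of the read against the commit}. Legality of $H$ is precisely what bridges the two: once $\valw{r_i}{H}$ coincides with $\lwrite{r_i}{H}$, the maximality of $c_k$ translates commit-order dominance into positional dominance. Apart from this observation, the argument is bookkeeping about the two completions and straightforward case analysis on conflict types.
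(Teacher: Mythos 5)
Your proof is correct, and it reuses the same witness $S$ and the same three-way case split on conflict types as the paper; the interesting difference is the \emph{direction} of the inclusion you establish. The paper shows $\prec_H^{CO} \subseteq \prec_H^{\mvc}$ (every conflict is an mv-conflict), using the legality of $S$, and then asserts the two orders ``are the same'' before concluding that $S$ satisfies $\prec_H^{\mvc}$. Strictly speaking, what the satisfaction claim needs is the reverse containment --- every mv-conflict pair must be ordered in $S$ --- and the paper's stated inclusion yields this only via an unstated totality argument (both relations order \emph{every} commit/read pair on a common object, so one containment forces equality). You prove the needed direction $\prec_H^{\mvc} \subseteq \prec_H^{CO}$ head-on, obtaining legality of $H$ itself from the contrapositive of \lemref{multi-co} so that $\valw{r_i}{H}=\lwrite{r_i}{H}$ and the r-c case goes through by maximality of the last write; the chain $\prec_H^{\mvc} \subseteq \prec_H^{CO} \subseteq \prec_S^{CO} \subseteq{<_S}$ then closes the argument without any appeal to totality. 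You also explicitly verify $\overline{H^o}=\overline{H^m}$ for sequential histories (no incomplete $\tryc$ operations), a bookkeeping point the paper's proof passes over silently. One cosmetic remark: in your c-r case the bound $c_j \leq_H c_k <_H r_i$ follows from validity of $H$ plus the \mvco{} definition alone; legality is genuinely needed only in the r-c case, as you yourself note at the end.
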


\begin{proof}
Since $H$ is \coop, we get that there exists an equivalent \legal{} \tseq{} history $S$ that respects the real-time and conflict orders of $H$. Thus if we show that $S$ satisfies \mvco{} of $H$ then $H$ is \mvop. From the definition of \coopty, we have that $H$ is sequential. 

Since $S$ is \legal, it turns out that the conflicts and \mvconflict{s} are the same. To show this, let us analyse each conflict order:
% It must be noted that since the definition of conflicts is applicable only for sequential histories, we have that $H$ is sequential. 
\begin{itemize}
\item c-c order: If two \op{s} are in c-c conflict, then by definition they are also ordered by the c-c \mvco. 

\item c-r order: Consider the two \op{s}, say $c_k$ and $r_i$ that are in conflict (due to a \tobj{} $x$). Hence, we have that $c_k <_H r_i$. Let $c_v = \valw{r_i}{H}$. Since, $S$ is \legal, either $c_k = c_v$ or $c_k <_H c_j$. In either case, we get that $c_k \prec_H^{\mvc} r_i$.

\item r-c order: Consider the two \op{s}, say $c_k$ and $r_i$ that are in conflict (due to a \tobj{} $x$). Hence, we have that $r_i <_H c_k$. Let $c_v = \valw{r_i}{H}$. Since, $S$ is \legal, $c_v <_H r_i <_H c_k$. Thus in this case also we get that $r_i  \prec_H^{\mvc} c_k$.
\end{itemize}

Thus in all the three cases, conflicts among the \op{s} in $S$ also result in \mvconflict{s} among these \op{s}. Hence, $S$ satisfies the \mvco{} of $H$. \qed
\end{proof}

This theorem shows that \coopty{} is a subset of \mvopty. The history $H\histref{illus}$ is \mvop{} but not in \coop. Hence, \coopty{} is a strict subset of \mvopty. \figref{ex2} shows the relation between the various classes. 

\begin{figure}[tbph]
\centerline{\scalebox{0.5}{\input{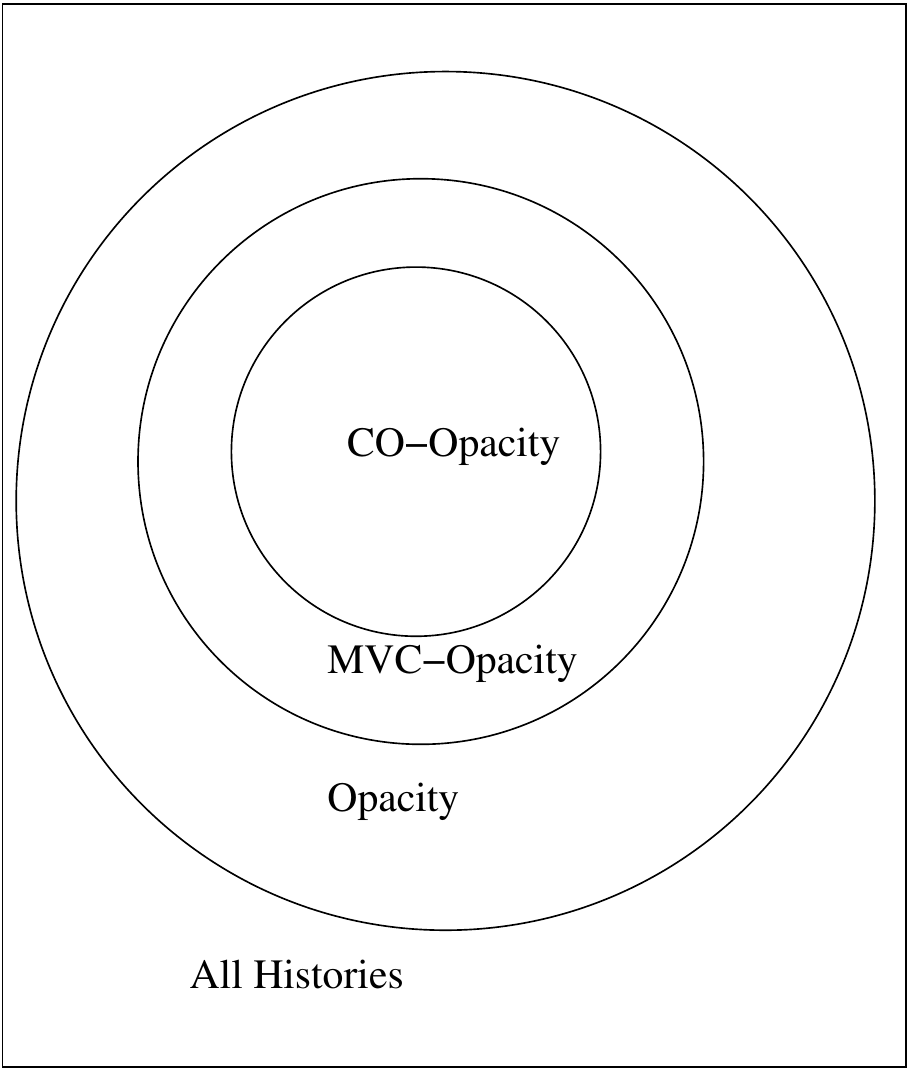_t}}}
\caption{Relation between the various classes}
\label{fig:ex2}
\end{figure}

\subsection{Graph Characterization of MVC-Opacity}
\label{subsec:graph}

In this section, we will describe graph characterization of \mvopty. This characterization will enable us to verify its membership in polynomial time. 

Given a history $H$, we construct a \textit{multi-version conflict graph}, $\mvcg{H} = (V,E)$ as follows:  (1) $V=\txns{H}$, the set of transactions in $H$; (2) an edge $(T_i,T_j)$ is added to $E$ whenever 
\begin{itemize}
\item[2.1] real-time edges: If $T_i$ precedes $T_j$ in $H$;
\item[2.2] \mvco{} edges: If $T_i$ contains an \op{} $p_i$ and $T_j$ contains $p_j$ such that $p_i \prec_{H}^{\mvc} p_j$.
\end{itemize}

\noindent The \mvg{} gives us a polynomial time graph characterization for \mvopty. We show it using the following lemma and theorem.

\begin{lemma}
\label{lem:tseq-graph}
Consider a \legal{} and \tseq{} history $S$. Then, $\mvcg{S}$ is acyclic. Formally, \\
$\langle (S \text{ is \legal}) \land (S \text{ is \tseq})  \implies (\mvcg{S} \text{ is acyclic}) \rangle$.
\end{lemma}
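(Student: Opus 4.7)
The plan is to show that every edge of $\mvcg{S}$ points ``forward'' in the t-sequential order of $S$. Since $S$ is \tseq, the transactions of $S$ are totally ordered by $\prec_{S}^{RT}$, and any directed graph whose edges all respect a given total order on its vertices is necessarily acyclic. So it suffices to verify, for each of the four sources of edges in $\mvcg{S}$, that an edge $(T_i, T_j)$ forces $T_i \prec_{S}^{RT} T_j$.

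Real-time edges are immediate by construction. For the \mvco{} edges, I would case-split on the three clauses of \defref{mvco}, leveraging the fact that in a \tseq{} history each transaction is a contiguous block of \op{s}, so the relative \op-order $<_S$ on events from two distinct transactions is determined by $\prec_S^{RT}$. The c-c case is straightforward: if $c_i <_S c_j$ with both $T_i, T_j$ writing to $x$, then since $c_i$ and $c_j$ lie in the last events of their respective transactions in a \tseq{} history, $T_i \prec_S^{RT} T_j$. For the c-r case, let $c_k = \valw{r_i}{S}$; by \legality{} of $S$, $c_k = \lwrite{r_i}{S}$, so $T_k$ precedes $T_i$ in $\prec_S^{RT}$. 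The \mvc{} condition gives either $T_j = T_k$ or $c_j <_S c_k$, so in either sub-case $T_j \prec_S^{RT} T_k \prec_S^{RT} T_i$ (or $T_j = T_k \prec_S^{RT} T_i$), yielding $T_j \prec_S^{RT} T_i$.

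The main obstacle will be the r-c case, since this is the only case where the claimed order is not immediately visible from the \mvc{} definition and requires \legality{} in an essential way. Here I have $r_i \prec_S^{\mvc} c_j$, with $c_k = \valw{r_i}{S}$ and $c_k <_S c_j$, both $T_k$ and $T_j$ writing $x$. The plan is to rule out $c_j <_S r_i$ by contradiction: if $c_j <_S r_i$, then the chain $c_k <_S c_j <_S r_i$ would make $c_j$ a commit of a writer of $x$ that is strictly closer to $r_i$ than $c_k$, contradicting $c_k = \lwrite{r_i}{S}$ (which holds because $S$ is \legal). Hence $r_i <_S c_j$, and since $r_i$ and $c_j$ lie in different transactions of a \tseq{} history, $T_i \prec_S^{RT} T_j$, as required.

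Having shown that every edge of $\mvcg{S}$ respects the linear order $\prec_S^{RT}$, any cycle in $\mvcg{S}$ would yield $T_{i_1} \prec_S^{RT} \cdots \prec_S^{RT} T_{i_1}$, contradicting antisymmetry of $\prec_S^{RT}$. Therefore $\mvcg{S}$ is acyclic.
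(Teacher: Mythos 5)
Your proposal is correct and follows essentially the same route as the paper's proof: both reduce acyclicity to showing that every edge of $\mvcg{S}$ respects the total (real-time) order on the transactions of the \tseq{} history, case-split over real-time, c-c, c-r and r-c edges, and invoke \legality{} in the r-c case to rule out the ordering $c_k <_S c_j <_S r_i$ (which would contradict $c_j$ not being the \lastw{}/\vwrite{} of $r_i$). Your treatment of the c-r case is in fact slightly more explicit than the paper's about why \legality{} forces $c_j <_S r_i$, but the argument is the same.
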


\begin{proof}
Since $S$ is \tseq, we can order all the transactions by their real-time order. We assume w.l.o.g that all the transactions of $S$ are ordered as $T_1 <_S T_2 <_S .... <_S T_n$. Thus, with our assumption we get that $T_i <_S T_j$ implies that $i < j$. 

Now we will show that for any edge $(T_i, T_j)$ in $\mvcg{S}$, we get that $i < j$. The edge $(T_i, T_j)$ can be one of the following:

\begin{itemize}
\item real-time: It follows from this case that $T_j$ started only after the commit of $T_i$. Hence, we get that $T_i <_S T_j$ and this implies $i < j$.

\item c-c conflict: Here, we have that $c_i <_S c_j$. Since $S$ is \tseq, we get that all the events of  $T_i$ occur before all the events of $T_j$. Hence $T_i <_S T_j$ and thus $i < j$.

\item c-r conflict: Here, $c_i <_S r_j$ for a read $r_j(x,v)$. Since $S$ is \tseq, similar to the above case we get that $T_i <_S T_j$ and hence $i < j$.

\item r-c conflict: Here, $r_i <_S c_j$ for a read $r_i(x,v)$. Let \vwrite{} of $r_i$ be $c_l$. From the definition of \mvconflict, we have two cases. Either (i) $c_l <_S c_j <_S r_i$ or (ii) $c_l <_S r_i <_S c_j$. Since $S$ is \legal, we get that case (i) is not possible. Otherwise, $c_j$ would have been the \vwrite{} of $r_i$. This leaves only case (ii) which implies that $r_i <_S c_j$. Since $S$ is \tseq, similar to the above two cases we get that $T_i <_S T_j$ and hence $i < j$.
\end{itemize}

Thus in all the cases, we get that an edge $(T_i, T_j)$ in the $\mvcg{S}$ implies that $i < j$. Hence, a cycle is not possible in such a graph. \qed 
\end{proof}

\cmnt {
\begin{lemma}
\label{lem:hs-satisfy}
Consider a \valid{} history $H$. Let $S$ be a sequential history such that $S$ satisfies $\prec_{H}^{\mvc}$. Then, $S$ satisfies $\prec_{S}^{\mvc}$ as well. Formally, $\langle (H \text{ is \valid}) \land (S \text{ is sequential}) \land (S \vdash \prec_{H}^{\mvc}) \implies (S \vdash \prec_{S}^{\mvc}) \rangle$.
\end{lemma}

\begin{proof}
We prove this lemma using contradiction. Suppose we are given that for a \valid{} history $H$, a sequential history $S$ satisfies $\prec_{H}^{\mvc}$. But $S$ does not satisfy $\prec_{S}^{\mvc}$, i.e. $S \nvdash \prec_{S}^{\mvc}$.

This implies that there are two \op{s} in $S$, $p_i$ belonging to $T_i$ and $q_j$ belonging to $T_j$ which are in \mvconflict. But $q_j$ occurs before $p_i$ in $S$ (note that $S$ is sequential). Now there are three cases, depending on the conflict condition:

\begin{itemize}
\item $p_i = c_i, q_j = c_j$: In this case, we get that $c_j <_S c_i$. But from the definition of the \mvconflict, we get that 
\item $p_i = c_i, q_j = r_j$: 
\item $p_i = r_i, q_j = c_j$: 
\end{itemize}
\qed
\end{proof}
}

\begin{theorem}
\label{thm:graph}
A \valid{} history $H$ is \mvop{} iff $\mvcg{H}$ is acyclic. 
\end{theorem}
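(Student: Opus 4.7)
The plan is to handle the two directions separately, leveraging the machinery already built up. For the forward direction, suppose $H$ is \mvop. Then by definition there is a \tseq{} history $S$ equivalent to $\overline{H^m}$ that respects $\prec_H^{RT}$ and satisfies $\prec_H^{\mvc}$. By \lemref{mvco-legal}, such an $S$ must be \legal, and then \lemref{tseq-graph} tells us $\mvcg{S}$ is acyclic. It then suffices to argue that every edge of $\mvcg{H}$ also appears in $\mvcg{S}$ (or at least is consistent with the t-sequential order of $S$): real-time edges are preserved by the assumption $\prec_H^{RT} \subseteq \prec_S^{RT}$, and \mvco{} edges from $H$ correspond, via satisfaction, to event orderings in $S$ that respect the t-sequential order. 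Hence $\mvcg{H}$ is acyclic too.

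For the reverse direction, assume $\mvcg{H}$ is acyclic. I would topologically sort its vertices to obtain a total order $T_{i_1}, T_{i_2}, \ldots, T_{i_n}$ on $\txns{\overline{H^m}}$ that extends every edge of $\mvcg{H}$. Using this order, I would construct a \tseq{} history $S$ by concatenating the per-transaction sub-histories $\overline{H^m}|T_{i_1}, \overline{H^m}|T_{i_2}, \ldots$ in order. By construction, $S$ is equivalent to $\overline{H^m}$ and is \tseq. Since real-time edges are included in $\mvcg{H}$, $S$ respects $\prec_H^{RT}$. Since every \mvco{} of $H$ corresponds to an edge $(T_i, T_j)$ in $\mvcg{H}$, and in $S$ all events of $T_i$ precede all events of $T_j$, $S$ satisfies $\prec_H^{\mvc}$. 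Combined with the hypothesis that $H$ is \valid, this gives all three conditions of \defref{mvcop}, so $H \in \mvopty$.

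The main obstacle I anticipate is the forward direction's verification that acyclicity of $\mvcg{S}$ really implies acyclicity of $\mvcg{H}$, since the two graphs have the same vertex set but potentially different edge sets. The cleanest way to handle this is to show that the t-sequential order induced by $S$ is a linear extension of $\mvcg{H}$: every real-time edge of $H$ is a real-time edge of $S$ by assumption; every \mvco{} edge $(T_i, T_j)$ of $\mvcg{H}$ means there exist \op{s} $p_i \in T_i$ and $p_j \in T_j$ with $p_i \prec_H^{\mvc} p_j$, and satisfaction together with the \tseq{} nature of $S$ forces $T_i$ to precede $T_j$ in $S$. Hence the edges of $\mvcg{H}$ all agree with the t-sequential order of $S$, so $\mvcg{H}$ must be acyclic. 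One small subtlety worth flagging in the reverse direction is that the \vwrite{} relation is preserved when we pass from $H$ to $S$ (because both histories share the same events and $S$ is equivalent to $\overline{H^m}$), so the \mvco{} of $H$ is well-defined on the events that appear in $S$, and \lemref{mvco-legal} applies to give \legality{} of $S$ for free, even though the definition of \mvop{} does not explicitly require $S$ to be \legal.
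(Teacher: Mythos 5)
Your proposal is correct and follows essentially the same route as the paper: the reverse direction constructs $S$ by topologically sorting $\mvcg{H}$ and checks the three conditions of \defref{mvcop} directly, and the forward direction uses \lemref{mvco-legal} and \lemref{tseq-graph} to get acyclicity of $\mvcg{S}$ and then transfers it to $\mvcg{H}$ (the paper does this via \lemref{satisfy-mvcsubset} to conclude $\mvcg{H} \subseteq \mvcg{S}$, which is the same observation as your ``linear extension'' argument). The subtlety you flag about legality of $S$ being a consequence rather than a hypothesis is exactly the remark the paper makes just before stating the theorem.
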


\begin{proof} We prove both the directions. \\
\noindent
\vspace{1mm}
\textit{if $\mvcg{H}$ is acyclic then $H$ is \mvop:} Since $\mvcg{H}$ is acyclic, we can perform a topological sort on $\mvcg{H}$. Using the order obtained from the topological sort, we order all the transactions of $\overline{H^m}$ to construct a \tseq{} history $S$. Thus from the construction of $S$, we get that $S$ is equivalent to $\overline{H^m}$. 

It can be seen that $S$ respects $\prec_{H}^{RT}$. If $T_i$ occurs before $T_j$ in $H$, then there is an edge between $T_i$ between $T_j$ in $\mvcg{H}$. This edge ensures that $T_i$ occurs before $T_j$ in $S$ as well.

Consider two \op{s} of $H$, $p_i$ (belonging to $T_i$) and $q_j$ (belonging to $T_j$). If $p_i \prec_{H}^{\mvc} q_j$ then there is an edge between $T_i$ and $T_j$ in $\mvcg{H}$. This edge ensures that $T_i <_S T_j$. Thus, we get that $p_i <_S q_j$. This shows that $S$ satisfies $\prec_{H}^{\mvc}$.

\vspace{1mm}
\noindent
\textit{if $H$ is \mvop{} then $\mvcg{H}$ is acyclic:} Since $H$ is \mvop, we get that there exists a \tseq, \legal{} history $S$ that is equivalent to $H$. We also have that $S$ respects the real-time order of $H$ and satisfies \mvc{} order of $H$. Combining this with \lemref{satisfy-mvcsubset}, we get that $S$ respects the \mvconflict{} order of $H$. Formally, $(\prec_{H}^{RT} \subseteq \prec_{S}^{RT}) \land (\prec_{H}^{\mvc} \subseteq \prec_{S}^{\mvc})$.

Thus, from the graph construction of $\mvcg{H}, \mvcg{S}$, we get that $\mvcg{H} \subseteq \mvcg{S}$. Since $S$ is \legal{} and \tseq, from \lemref{tseq-graph} we get that $\mvcg{S}$ is acyclic. This implies that $\mvcg{H}$ is also acyclic since it is a subgraph of $\mvcg{S}$. 
\qed
\end{proof}

\noindent \figref{mvcgraphs} shows the \mvg{s} for the histories $H\histref{illus}$, $H\histref{nseq}$ and $H\histref{mvcsub}$. In these graphs and other conflict graphs shown in this paper, we have ignored $T_0$ for simplicity.

\begin{figure}[tbph]
\centerline{\scalebox{0.5}{\input{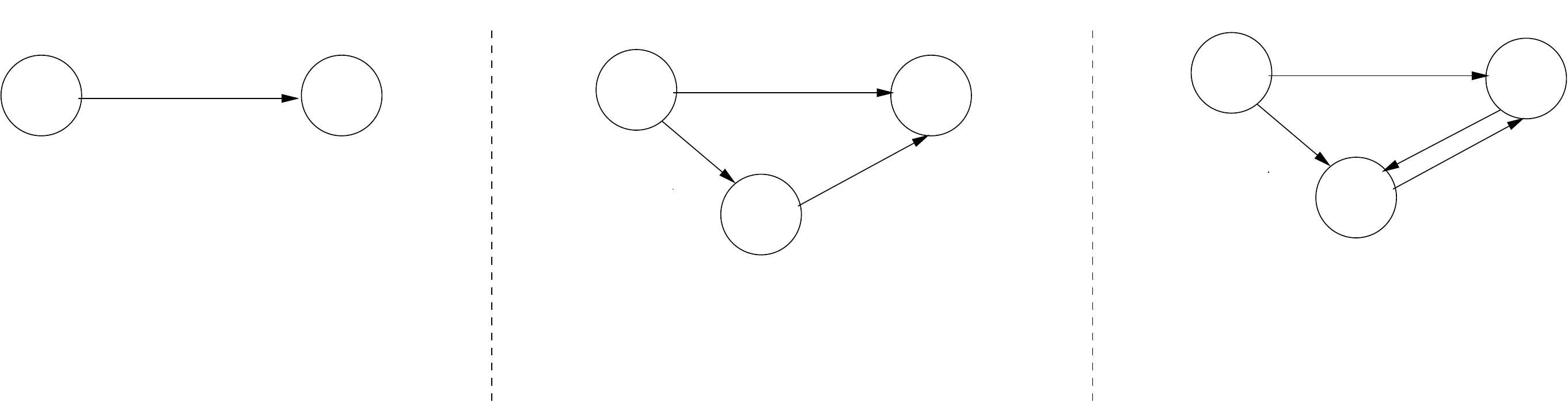_t}}}
\caption{\mvg{s} of $H\histref{illus}$, $H\histref{nseq}$ and $H\histref{mvcsub}$}
\label{fig:mvcgraphs}
\end{figure}

\section{Online Scheduling with Multiple Versions}
\label{sec:ols}

An important question that arises while building a multi-version STM system is among the various versions available, which version should a transaction read from? The question was first analyzed in the context of database systems \cite{HadzPapad:AAMVCC:PODS:1985, PapadKanel:1984:MultVer:TDS}. A transactional system (either Database or STM) must decide ``on the spot'' or schedule online which version a transaction can read from based on the past history. 

We say a STM implementation $I$ \emph{schedules online} (i.e. decides on the spot) if every invocation to an \op{} that it exports (read, write, \tryc, \trya) returns in finite time. We denote $I$ as \emph{online schedulable} (OLS) (term inspired from databases). Note that $I$ can make a decision on scheduling based only on the past history of \op{s} seen so far as it does not have any idea of the future. In other words, all the methods of $I$ are wait-free. 

%This requirement of on the spot scheduling was first explored in databases \cite{HadzPapad:AAMVCC:PODS:1985}. 

%We say a STM implementation $I$ \emph{schedules online} (i.e. decides on the spot) (1) if every invocation to an \op{} that it exports (read, write, \tryc, \trya) returns in finite time (2) the response to an \op{} is based on the \op{s} executed so far, i.e. is based only on the past history. The point of having second condition is to 

But unfortunately this notion of online scheduling can sometimes lead to unnecessary aborts of transactions. We illustrate this idea with an example while considering \mvopty{} as the \cc. Consider the sequential history \hist{$H\histref{ols-ils-pref} = w_1(x, 1) w_1(y, v_1) w_2(x, 2) r_k(z,0) c_1 w_2(z, v_2) c_2 r_3(x, ?^1_2)$}. \label{hist:ols-ils-pref} In this history, $r_3(x)$ has the option of reading $1$ from $T_1$ or $2$ from $T_2$ (denoted as $r_3(x, ?^1_2)$). $T_3$ can not read $x$ from $T_0$ as it would violate the real-time order requirement between $T_0, T_1$ imposed by \mvopty (as well as \opty). Suppose $T_3$ reads $2$ for $x$ written by $T_2$. Now consider a sequence of events that follow the read \op. Let \hist{$H\histref{ols-ils1} = w_1(x, 1) w_1(y, v_1) w_2(x, 2) r_k(z,0) c_1 w_2(z, v_2) r_j(b, 0) \\ 
c_2 r_3(x, 2) w_3(b, v_3) w_k(b, v_k) w_j(d, v_j)$}. \label{hist:ols-ils1}  $H\histref{ols-ils1}$ is a possible extension of $H\histref{ols-ils-pref}$. It can be seen that $H\histref{ols-ils-pref}$ is \mvop{} (with $T_3$ reading $2$). But $H\histref{ols-ils1}$ is not as there is a cycle between the transactions $T_2, T_3, T_k$ in the \mvg. 

Suppose $T_3$ had read $1$ instead of $2$ for $x$. Now consider the modified history consisting of same extension of $H\histref{ols-ils-pref}$ (assuming that the read of $T_3$ did not affect the future events), \hist{$H\histref{ols-ils2} = w_1(x, 1) w_1(y, v_1) w_2(x, 2) r_k(z,0) c_1 w_2(z, v_2) r_j(b, 0) c_2 r_3(x, 1) w_3(b, v_3) w_k(b, v_k) w_j(d, v_j)$}. \label{hist:ols-ils2} It can be seen that $H\histref{ols-ils2}$ is \mvop. $H\histref{ols-ils1}$ will be \mvop{} if $T_k$ is aborted. This shows that the versions read by a transaction can cause other transactions to abort in future. \figref{ols-ils} illustrates this concept. 

\begin{figure}[tbph]
\centerline{\scalebox{0.5}{\input{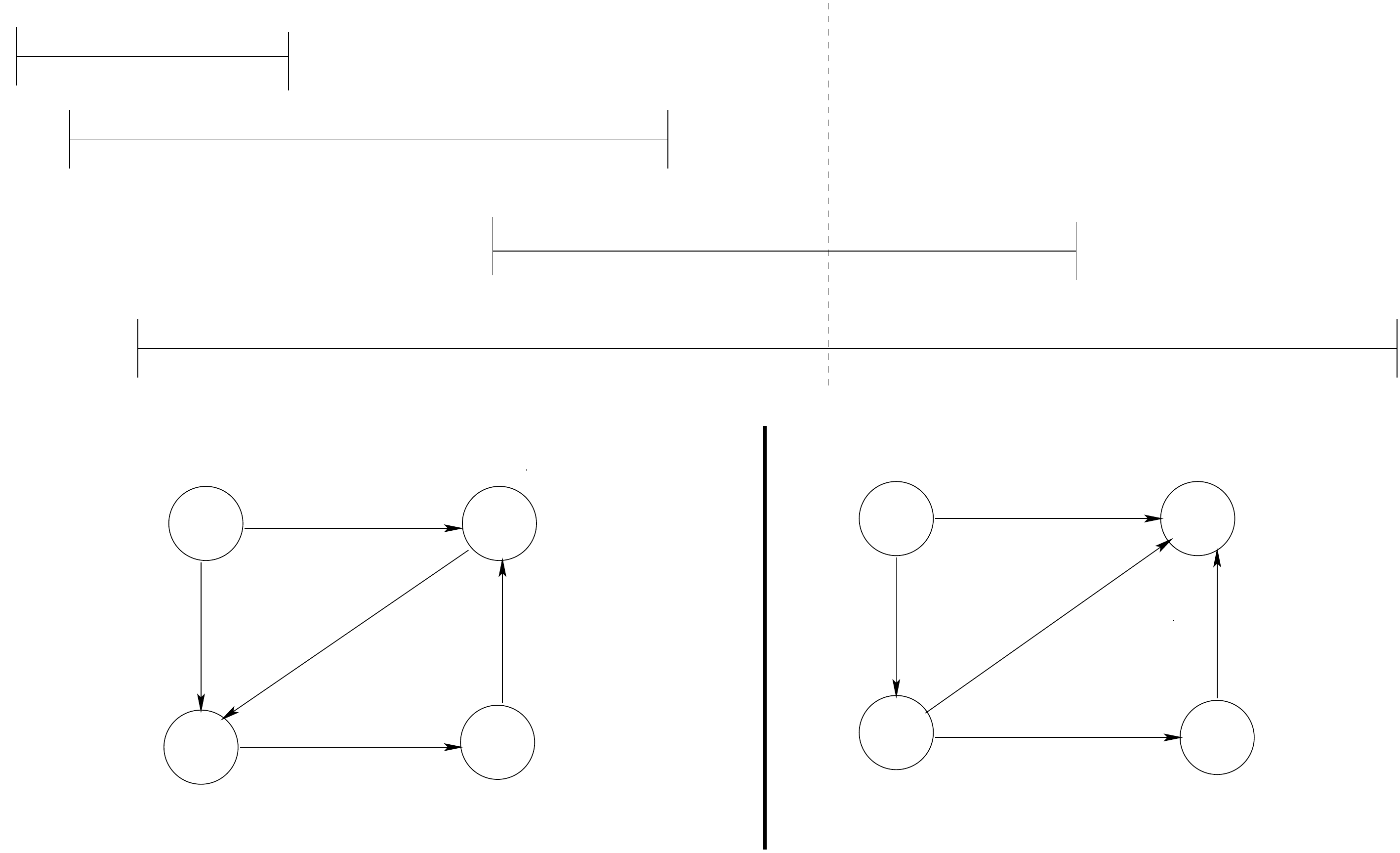_t}}}
\caption{Illustration of difficulties with online scheduling}
\label{fig:ols-ils}
\end{figure}

%The notion of online scheduling in STMs can be combined with \pness. We formally denote this concept as \textit{\olsness} and is defined w.r.t a \cc, similar to \pness. 

To capture the notion of online scheduling which avoid unnecessary aborts in STMs, we have identified a new concept \textit{\olsness} and is defined w.r.t a \cc, similar to \pness. 

Let $C$ be a \cc{} with a history $H$ being permissive w.r.t $C$, i.e. $H \in \permfn{C}$. Then let $T_a$ be an aborted transaction in $H$. Let $r_i(x,v)$ be any successful read \op (i.e. $v \neq A$) in $H$ that completed before the abort response of $T_a$, i.e. $(\rsp{r_i(x)}{v} <_H \rsp{r_a(z)}{A} / \\
\rsp{\tryc_a}{A} / \rsp{\trya_a}{A})$ (for some $r_a$). Suppose $r_i(x)$ read a different value $u$ ($A \neq u \neq v$) from among the various versions available (that were created before by update transactions). Then, committing $T_a$, by replacing the abort value returned by an \op{} in $T_a$ with some non-abort value, would cause $H$ to violate $C$. In other words, if $T_a$ were to be committed with $r_i(x)$ reading $u$, $H$ will no longer be in $C$. We say that $H$ is \emph{\ols} w.r.t $C$. 

In the above example, $H\histref{ols-ils1}$ is not \ols{} w.r.t \mvopty. We denote the set of histories that are \ols{} w.r.t $C$ as $\olsfn{C}$. Along the same lines, we say that STM implementation $I$ is \ols{} w.r.t some \cc{} $C$ (such as opacity) if every history $H$ generated by $I$ is \ols{} w.r.t $C$, i.e., $\gen{I} \subseteq \olsfn{C}$.

It turns out that multiple versions make online scheduling very difficult. In fact we show in the following sub-section that it is impossible to achieve \olsness{}.

\subsection{On Impossibility of \olsness{} with multiple versions} 
\label{subsec:olsness}

As mentioned above, multiple versions make online scheduling very difficult. In this sub-section, we first show that it is impossible for an OLS implementation $I$ that to be \ols{} w.r.t \mvopty. Then, we show that it is impossible for $I$ to be \ols{} w.r.t \opty{} as well.

To show our result, we consider a centralized adversary $\mathcal{A}$ that has complete knowledge of the working of the implementation $I$. We assume that the adversary invokes the next method on the implementation $I$ based on the previous responses. It waits for the response of the previous event before it can fire the next invocation event. Hence, the histories considered in following sub-section are sequential. It must be noted that making this assumption does not restrict the generality of the results as sequential histories are a special case of histories. 

\begin{theorem}
No OLS STM implementation can be \ols{} w.r.t \mvopty. 
\end{theorem}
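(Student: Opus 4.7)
The approach is an adversarial argument of exactly the flavour already sketched in the paragraph preceding the theorem. Suppose, for contradiction, that some OLS implementation $I$ is \ols{} w.r.t.\ \mvopty. Since $I$ is online, each method invocation is wait-free and its response is a deterministic function of the past events; therefore the centralised adversary $\mathcal{A}$, which is assumed to know $I$ completely, can simulate $I$ and predict the response that $I$ is about to return \emph{before} invoking the next operation.

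First I would drive $I$ through a prefix of the form of $H\histref{ols-ils-pref}$, i.e.\ commit $T_1$ with $w_1(x,1)$ and $T_2$ with $w_2(x,2)$ (plus the auxiliary writes used in the example so that the ``other choice'' extension given below remains \mvop), and then invoke $r_3(x)$. At this point both committed versions of $x$ are available, and both possible responses extend the prefix to an \mvop{} history; by \olsness{} $I$ must therefore return a non-$A$ value $v\in\{1,2\}$ rather than abort $T_3$. The adversary branches on $v$.

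Next, for each value of $v$, I would exhibit a concrete extension of the prefix, using only operations of fresh transactions $T_j,T_k$ (plus the trailing writes of $T_3$), that plays the role of $H\histref{ols-ils1}$ versus $H\histref{ols-ils2}$. The extension has to satisfy two properties simultaneously: \textbf{(i)} if $r_3(x)$ had returned the \emph{other} allowed value, the resulting history would lie in \mvopty{} with no transaction aborted, i.e.\ its \mvg{} is acyclic by \thmref{graph}; and \textbf{(ii)} with the actual value $v$ returned by $I$, the \mvg{} of the extended history contains a cycle through some $T_a\in\{T_j,T_k\}$, and by \thmref{graph} again the only way to place the history into \mvopty{} is to abort $T_a$. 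The case $v=2$ is exactly the pair $(H\histref{ols-ils1},H\histref{ols-ils2})$ already constructed in the text; the case $v=1$ is handled by the symmetric construction obtained by swapping the roles of $T_1,T_2$ and of the auxiliary transactions in the extension. Since $\mathcal{A}$ foresees $v$, it always appends the extension in which $v$ is the bad choice, producing a history $H'\in\gen{I}$ in which some $T_a$ is aborted, yet had $r_i(x)$ of $T_3$ returned the alternative version the resulting history would be \mvop{}. By the definition of \olsness{} this means $H'\notin\olsfn{\mvopty}$, contradicting $\gen{I}\subseteq\olsfn{\mvopty}$.

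The main obstacle, and really the only non-bookkeeping step, is verifying claim (ii) cleanly: I must write down, for each adversarial branch, an extension whose \mvg{} has a cycle that uses precisely the c-r (or r-c) edge induced by the chosen version $v$, and contains no cycle that would remain after substituting the alternative version. Because \mvconflict{s} involving the read $r_3(x,v)$ are fully determined by its \vwrite{} (see \defref{mvco}), this amounts to placing $T_k,T_j$'s reads and writes on the auxiliary objects ($b,z,d$ in \figref{ols-ils}) so that exactly one of the two r-c/c-r edges between $\{T_2,T_3\}$ or $\{T_1,T_3\}$ completes a cycle through them, and the symmetric extension is obtained by relabeling. Once both branches are spelled out, the contradiction, and hence the theorem, follow immediately.
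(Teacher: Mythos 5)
Your proposal follows essentially the same adversarial argument as the paper's proof: the same prefix ending in $r_3(x,?^1_2)$, a branch on the value $I$ returns, and for each branch an extension in which that value forces an abort that the other value would have avoided (the paper's pairs $H\histref{mvc-ols1-x2}$/$H\histref{mvc-ols1-x1}$ for the read of $2$ and $H\histref{mvc-ols2-x1}$/$H\histref{mvc-ols2-x2}$ for the read of $1$). One small correction: the second branch cannot be obtained by ``swapping the roles of $T_1,T_2$'' since the prefix is already fixed when the adversary branches; what the paper actually does is re-route the auxiliary writes so that the cycle closes through $T_j$ (via $z$ and $b$) rather than through $T_k$ --- precisely the rearrangement of the auxiliary transactions that you correctly flag as the remaining obstacle.
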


\begin{proof}
Let us suppose that an OLS STM implementation $I$ is \ols{} w.r.t \mvopty. From the definition of \olsness, we get that $I$ is also \perm{} w.r.t \mvopty. 

Some of the arguments used in this proof are similar to the description in the start of this section. Consider the sequential history \hist{$H\histref{mvc-ols-pref} = w_1(x, 1) w_1(y, v_1) w_2(x, 2) r_k(z,0) c_1 w_2(z, v_2) r_j(b,0) c_2 r_3(x, ?^1_2)$} \label{hist:mvc-ols-pref} (this history is similar to $H\histref{ols-ils-pref}$). Assume that the adversary $\mathcal{A}$ invokes same \op{s} on $I$ as this history. Since $I$ is \perm{} w.r.t \mvopty, it will not unnecessarily return abort to any of these \op{s}. For the read $r_k(z)$, $I$ will return 0 since so far no write to $z$ has taken place. The same argument holds for $r_j(b,0)$. Thus the output by $I$ is same as $H\histref{mvc-ols-pref}$ until $r_3(x)$. 

For $r_3(x)$, $I$ has the option of returning either $1$ or $2$. It can not return $0$ (written by $T_0$) as it violate real-time ordering required by \mvopty. Suppose $I$ returned $2$ for the read $r_3(x)$. Now consider an extension of $H\histref{mvc-ols-pref}$, \hist{$H\histref{mvc-ols1-x2} = w_1(x, 1) w_1(y, v_1) w_2(x, 2) r_k(z,0) c_1 w_2(z, v_2) r_j(b, 0) c_2 r_3(x, 2) w_3(b, v_3) w_k(b, v_k) \\
w_j(d, v_j)$}. \label{hist:mvc-ols1-x2} It can be seen that $H\histref{mvc-ols1-x2}$ is not \mvop{} as there is a cycle between the transactions $T_2, T_3, T_k$ in the \mvg. Suppose $\mathcal{A}$ invokes the \op{s} of $H\histref{mvc-ols1-x2}$ on $I$ after the invocation of $r_3(x)$. Since $H\histref{mvc-ols1-x2}$ is not \mvop, $\mathcal{A}$ invokes the next \op{} only after receiving the previous response and $I$ is \perm{} w.r.t \mvop, $I$ would be forced to abort $T_k$. 

%Since $H\histref{mvc-ols1}$ is not \mvop, $I$ would be forced to abort one of $T_3$, $T_k$. 

Now, consider the case that $I$ had returned $1$ for $r_3(x)$ instead of $2$. The resulting history, \hist{$H\histref{mvc-ols1-x1} = w_1(x, 1) w_1(y, v_1) w_2(x, 2) r_k(z,0) c_1 w_2(z, v_2) r_j(b, 0) c_2 r_3(x, 1) w_3(b, v_3) w_k(b, v_k) w_j(d, v_j) c_j c_k$}. \label{hist:mvc-ols1-x1} It can be seen that $H\histref{mvc-ols1-x1}$ is \mvop{} with an equivalent \tseq{} history being $T_1 T_j T_3 T_k T_2$. Thus, in this case $I$ would not have to abort any transaction. $H\histref{mvc-ols1-x1}$ is in $\olsfn{\mvopty}$. \figref{mvc-ols1} illustrates this scenario. %This suggests that given an option among various available versions, a transaction must read the earlier one. 

\begin{figure}[tbph]
\centerline{\scalebox{0.5}{\input{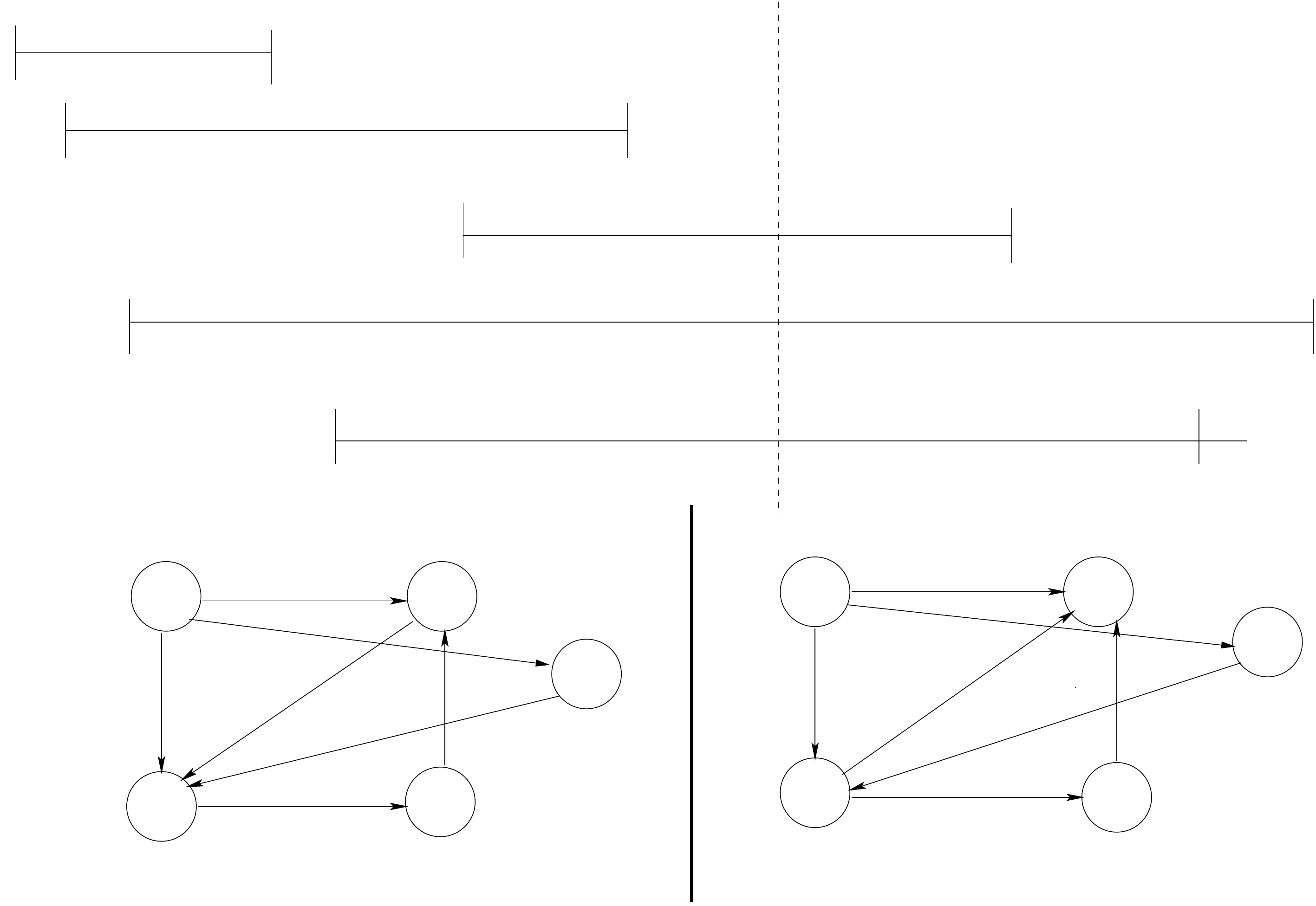_t}}}
\caption{$H\histref{mvc-ols1-x1}$ containing $r_3(x, 1)$ is \mvop}
\label{fig:mvc-ols1}
\end{figure}

Next consider another extension of the history $H\histref{mvc-ols-pref}$, \hist{$H\histref{mvc-ols2-x1} = w_1(x, 1) w_1(y, v_1) w_2(x, 2) r_k(z,0) c_1 \\
w_2(z, v_2) r_j(b, 0) c_2 r_3(x, 1) w_3(b, v_3) w_k(d, v_k) w_j(z, v_j) c_j c_k$}. \label{hist:mvc-ols2-x1} It can be seen that, $H\histref{mvc-ols2-x1}$ is not \mvop{} as there is a cycle between the transactions $T_2, T_j, T_3$ in the \mvg. Suppose $\mathcal{A}$ invokes the \op{s} of $H\histref{mvc-ols2-x1}$ on $I$. Let $I$ returns $1$ for $r_3(x)$ (not knowing what \op{s} could be invoked in future). Then in this case, $I$ would be forced to abort $T_j$ since $H\histref{mvc-ols2-x1}$ is not \mvop{}, $\mathcal{A}$ invokes the next \op{} only after receiving the previous response and $I$ is \perm{} w.r.t \mvop{}. 

On the other hand, suppose $I$ returned $2$ for the above sequence of \op{} invocation by $\mathcal{A}$. The resulting history is \hist{$H\histref{mvc-ols2-x2} = w_1(x, 1) w_1(y, v_1) w_2(x, 2) r_k(z,0) c_1 w_2(z, v_2) r_j(b, 0) c_2 r_3(x, 2) w_3(b, v_3) \\ 
w_k(d, v_k) w_j(z, v_j)$}. \label{hist:mvc-ols2-x2} It can be seen that this history is \mvop{} with an equivalent \tseq{} history being $T_1 T_k T_2 T_j T_3$. Hence, in this case $I$ would output this history without aborting any transaction. $H\histref{mvc-ols2-x2}$ is in $\olsfn{\mvopty}$. \figref{mvc-ols2} illustrates this scenario. 

\begin{figure}[tbph]
\centerline{\scalebox{0.5}{\input{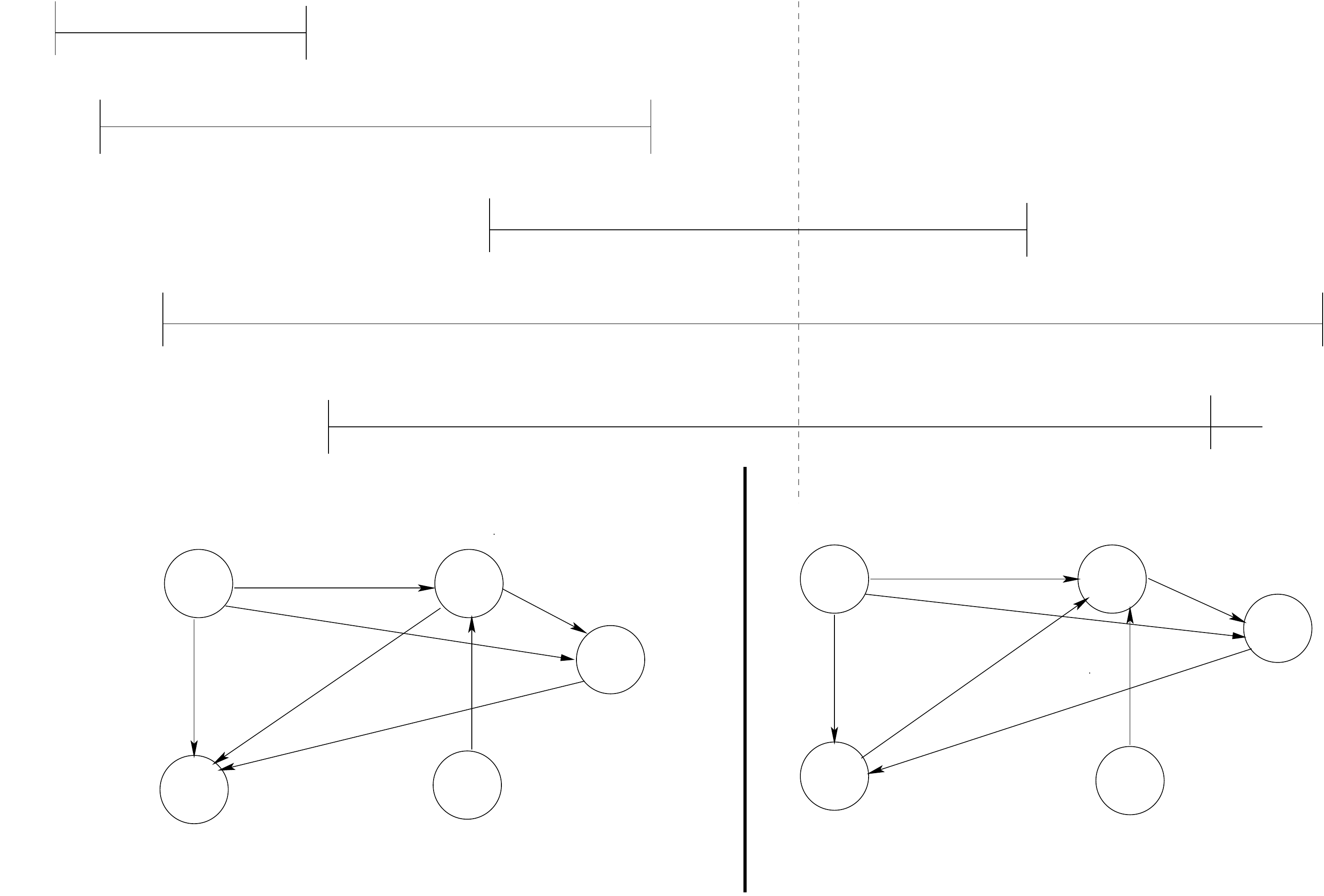_t}}}
\caption{$H\histref{mvc-ols2-x2}$ containing $r_3(x, 2)$ is \mvop}
\label{fig:mvc-ols2}
\end{figure}

These examples illustrate that given the sequence of \op{s} in $H\histref{mvc-ols-pref}$, returning either $1$ or $2$ for $r_3(x)$ by $I$ can possibly cause some transaction in future to abort depending on the sequence of invocations. Whereas reading the other value would have avoided the abort. This is because when $I$ received the event $r_3(x)$, it has no idea about the future events and is OLS. Hence, $I$ can not be in $\olsfn{\mvopty}$. 
\end{proof}

The difficulty of online scheduling is not restricted only to \mvopty. We now show that the impossibility extends to \opty{} as well. In showing this, we use arguments very similar what we have used to the above proof.

\begin{theorem}
No OLS STM implementation can be \ols{} w.r.t \opty. 
\end{theorem}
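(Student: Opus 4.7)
\noindent\textit{Plan.} The plan is to mimic the adversarial game used in the preceding theorem, but to replace the extensions with ones that obstruct \opty{} rather than merely \mvopty{}. As before, a centralised adversary $\mathcal{A}$ would drive $I$ through a prefix ending in $r_3(x)$, where two committed transactions $T_1$ and $T_2$ have written values $1$ and $2$ to $x$ and the real-time order rules out returning $0$. Since $I$ is online, it must commit blindly to one of $\{1,2\}$, whereupon $\mathcal{A}$ branches: if $I$ returns $2$ it delivers an extension $E_2$, and if $I$ returns $1$ a symmetric extension $E_1$.

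The novelty lies in the extensions. In the \mvopty{} proof, the auxiliary transactions $T_k,T_j$ were live, so their writes were discarded by the opacity completion and the equivalent t-sequential history was free to sidestep the \mvconflict{} cycle. For \opty{}, the extension transactions must commit, and their reads must be \emph{pinned} by validity rather than chosen by $I$. The pinning trick is to place a read $r_m(y, ?)$ in the interval between $c_1$ and $\inv{\tryc_2}$: because $T_2$'s write of $y$ is not yet valid and $T_0$'s value is excluded by the real-time constraint $T_1 \prec T_m$, the implementation is compelled to return $T_1$'s value. In any \legal{} t-sequential equivalent this read then forces either $T_2 \prec T_1$ or $T_m \prec T_2$; a further forced read inside $T_m$ (again placed in a window where only one committed writer is valid) would rule out the second alternative, leaving $T_2 \prec T_1$ as a firm constraint. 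Dually for $E_1$.

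Combined with the ordering $T_1 \prec T_2$ imposed by the answer $r_3(x){=}2$, extension $E_2$ produces a serialisation cycle, so the total history admits no \legal{} t-sequential equivalent and is therefore not \opq. Since $I$ is \perm{} w.r.t.\ \opty{} and $\mathcal{A}$ waits for each response before invoking the next operation, $I$ is eventually forced to return abort for some committed extension transaction $T_a$; the symmetric argument using $E_1$ handles the case $r_3(x){=}1$. In both branches, had $I$ returned the opposite value for $r_3(x)$, the same extension would have remained \opq{} and $T_a$ would have been safely committable, so the abort was avoidable by a different version choice at $r_3$---contradicting \olsness{} w.r.t.\ \opty.

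The principal obstacle I anticipate is closing $I$'s escape routes: \opty{} is strictly more permissive than \mvopty, so the implementation can try to dodge the intended cycle by assigning subsequent reads to values that license a different serialisation. My plan is to saturate these routes by stacking several pinned reads of the kind described above, so that their combined legality requirements force the order of $T_1$ and $T_2$ in every possible equivalent. The remaining ingredients---the centralised adversary's strategy, the appeal to permissiveness, and the extraction of the avoidable abort---carry over unchanged from the previous theorem.
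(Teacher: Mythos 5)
Your high-level strategy is the right one and matches the paper's: a centralised adversary drives $I$ to a branching read, and for each possible return value there is an extension that is not \opq{} (forcing an abort by \pness) but would have been \opq{} under the other return value. However, as written the argument has a genuine gap: the theorem's entire content lies in exhibiting concrete extensions with the required properties, and you never do so --- you explicitly defer the hard part (``closing $I$'s escape routes'') to future ``stacking'' of pinned reads. Worse, the one pinning mechanism you do describe does not close the route you need it to close. Your first pinned read $r_m(y,\cdot)$ between $c_1$ and $\inv{\tryc_2}$ indeed forces $T_m$ to read $T_1$'s version, yielding the disjunction $T_2 \prec T_1$ or $T_m \prec T_2$. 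But to discharge the second disjunct you need a read of $T_m$ that forces $T_2 \prec T_m$, e.g.\ $r_m(z,v_2)$ after $c_2$. Since $T_m$'s first event precedes $\inv{\tryc_2}$, we do not have $T_2 \prec_H^{RT} T_m$, so a \perm{} implementation may legitimately return $T_0$'s value $0$ for that read, serialise $T_m$ before $T_2$, keep the whole extension \opq, and escape without any abort. The constraint $T_2 \prec T_m$ is exactly what you were trying to establish, so you cannot assume it to exclude the $0$-version; the argument is circular at precisely the point where it must not be.

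The paper's construction avoids this trap by changing the prefix rather than the extensions. There the branching read chooses between the \emph{initial} version written by $T_0$ and a single later version written by $T_1$ (i.e.\ $r_k(x,?^0_1)$ rather than your $r_3(x,?^1_2)$), and every auxiliary ``pinning'' read (such as $r_3(x,0)$ issued before $w_1(x,1)$ even exists, or $r_k(y,0)$ and $r_3(y,0)$ issued before any other write of $y$ commits) is placed at a moment where exactly one \valid{} version exists, so the implementation has no choice at all and no escape route to close. The resulting legality constraints then form a cycle with the constraint induced by the branching read in one branch and not the other. If you want to salvage your version of the argument, you must either adopt that device or exhibit explicit histories $E_1, E_2$ and verify by hand that \emph{every} response the implementation could give to every subsequent read still leaves no \legal{} \tseq{} equivalent; until then the theorem is not proved.
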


\cmnt {
\begin{proof}
Let us suppose that an OLS STM implementation $I$ is \ols{} w.r.t \opty. From the definition of \olsness, we get that $I$ is also \perm{} w.r.t \opty. 

Consider the sequential history \hist{$H\histref{opq-ols-pref} = w_0(x, 0) w_0(y, 0) w_0(z, 0) c_0 r_3(x, 0) w_1(x, 1) r_k(z, 0) c_1 r_k(x, ?^0_1)$}. \label{hist:opq-ols-pref} Assume that the adversary $\mathcal{A}$ invokes same \op{s} on $I$ as this history. Since $I$ is \perm{} w.r.t \opty, it will not unnecessarily return abort to any of these \op{s}. For the read $r_3(x)$, $I$ will return 0 since so far no write to $x$ has taken place. The same is true with $r_k(z, 0)$. Thus the output by $I$ is same as $H\histref{opq-ols-pref}$ until $r_k(x)$. 

For $r_k(x)$, $I$ has the option of returning either $0$ or $1$ (the two values written onto $x$ so far). Suppose $I$ returned $1$ for the read $r_k(x)$. Now consider an extension of $H\histref{opq-ols-pref}$, \hist{$H\histref{opq-ols1-x1} = w_0(x, 0) w_0(y, 0) w_0(z, 0) c_0 \\
r_3(x, 0) w_1(x, 1) r_k(z, 0) c_1 r_k(x, 1) r_k(y, 0) c_k w_3(y, 3) c_3$}. \label{hist:opq-ols1-x1} It can be seen that $H\histref{opq-ols1-x1}$ is not \opq. Suppose $\mathcal{A}$ continues with invocation of  the \op{s} of $H\histref{opq-ols1-x1}$ on $I$ after $r_k(x)$. Here $I$ returns $0$ for $r_k(y)$ since no other transaction has written to $y$ yet. Since $H\histref{opq-ols1-x1}$ is not \opq, $\mathcal{A}$ invokes the next \op{} only after receiving the previous response and $I$ is \perm{} w.r.t \opty, $I$ would be forced to abort $T_3$. 

%after the invocation of $r_3(x)$

Now, consider the case that $I$ returns $0$ for $r_k(x)$ instead of $1$. The resulting history, \hist{$H\histref{opq-ols1-x0} = w_0(x, 0) w_0(y, 0) w_0(z, 0) c_0 r_3(x, 0) w_1(x, 1) r_k(z, 0) c_1 r_k(x, 0) r_k(y, 0) c_k w_3(y, 3) c_3$}. \label{hist:opq-ols1-x0} It can be seen that $H\histref{opq-ols1-x0}$ is \opq{} with an equivalent \tseq{} history being $T_0 T_k T_3 T_1$. Thus, in this case $I$ would not have to abort any transaction. $H\histref{opq-ols1-x0}$ is in $\olsfn{\opty}$. \figref{opq-ols1} illustrates this scenario. %This suggests that given an option among various available versions, a transaction must read the earlier one. 

\begin{figure}[tbph]
\centerline{\scalebox{0.5}{\input{opq-ols1.pdf_t}}}
\caption{$H\histref{opq-ols1-x0}$ containing $r_k(x, 0)$ is \opq}
\label{fig:opq-ols1}
\end{figure}

Next consider another extension of the history $H\histref{opq-ols-pref}$, \hist{$H\histref{opq-ols2-x0} = w_0(x, 0) w_0(y, 0) w_0(z, 0) c_0 r_3(x, 0) \\
w_1(x, 1) r_k(z, 0) c_1 r_k(x, 0) w_k(y, k) r_3(y, 0) c_k w_3(y, 3) c_3$}. \label{hist:opq-ols2-x0} It can be seen that, $H\histref{opq-ols2-x0}$ is not \opq. Suppose $\mathcal{A}$ invokes the \op{s} of $H\histref{opq-ols2-x0}$ on $I$. Let $I$ return $0$ for $r_k(x)$ (not knowing what \op{s} could be invoked in future). $I$ returns $0$ for $r_3(y)$ \op{} since no other transaction has yet written to $y$ yet similar to $r_3(x,0)$. Then in this case, $I$ would be forced to abort $T_3$ since $H\histref{opq-ols2-x0}$ is not \opq{}, $\mathcal{A}$ invokes the next \op{} only after receiving the previous response and $I$ is \perm{} w.r.t \opty. 

On the other hand, suppose $I$ returned $1$ for the above sequence of \op{} invocation by $\mathcal{A}$. The resulting history is \hist{$H\histref{opq-ols2-x1} = w_0(x, 0) w_0(y, 0) w_0(z, 0) c_0 r_3(x, 0) w_1(x, 1) r_k(z, 0) c_1 r_k(x, 1) w_k(y, k) r_3(y, 0) \\ 
c_k w_3(y, 3) c_3$}. \label{hist:opq-ols2-x1} It can be seen that this history is \opq{} with an equivalent \tseq{} history being $T_0 T_3 T_1 T_k$. Hence, in this case $I$ would output this history without aborting any transaction. $H\histref{opq-ols2-x1}$ is in $\olsfn{\opty}$. \figref{opq-ols2} illustrates this scenario. 

\begin{figure}[tbph]
\centerline{\scalebox{0.5}{\input{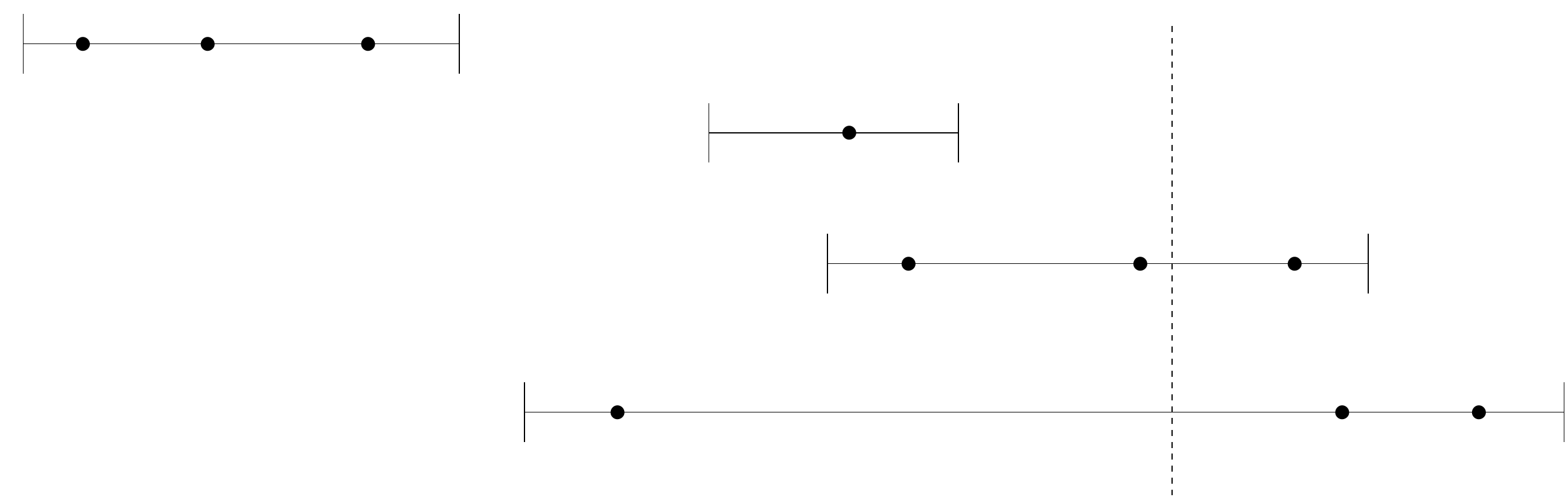_t}}}
\caption{$H\histref{opq-ols2-x1}$ containing $r_k(x, 1)$ is \opq}
\label{fig:opq-ols2}
\end{figure}

These examples illustrate that given the sequence of \op{s} in $H\histref{opq-ols-pref}$, returning either $0$ or $1$ for $r_3(x)$ by $I$ can possibly cause some transaction in future to abort depending on the sequence of invocations. Whereas reading the other value would have avoided the abort. This is because when $I$ received the $r_3(x)$, it has no idea about the future events and is OLS. Hence, $I$ can not be in $\olsfn{\opty}$. \qed
\end{proof}
}

%\input{algo}

%\section{Discussion: Outline of a STM System using Multiversion Conflicts}
\section{Discussion}
\label{sec:disc}

\subsection{Multi-Version Conflicts on other Correctness Criteria}
\label{subsec:othcc}
So far in this paper, we have demonstrated the effectiveness of \mvco{s} using \opty. This conflict notion can be applied to other \cc{} such as local-\opty{} (\lopty) \cite{KuzSat:NI:ICDCN:2014} and virtual world consistency (\vwc) \cite{ImbsRay:2009:SIROCCO}. Both these \ccs{} were defined for sequential histories. %They consider a series of sequential sub-histories. 

A history $H$ is \lopq{} if the following conditions hold: (1) Let the sub-history $H_{com}$ consist of events from all the committed transactions in $H$. Then $H_{com}$ should be \opq; (2) Let $T_a$ be an aborted transaction in $H$. Suppose $H_a$ be a sub-history consisting of all the transactions that committed before the abort of $T_a$ in $H$. Then, for each aborted transaction $T_a$, $H_a$ is \opq. 

% It can be seen that \mvconflict{s} can be applied we can define \emph{multi-version conflict local-opacity} using 
We say a history $H$ is \emph{multi-version conflict local-opaque} (\mvlo) if for each history $H$, (1) $H_{com}$ is \mvop; (2) for each aborted transactions $T_a$, $H_a$ is \mvop. 

Further, it can be seen that the impossibility results of \secref{ols}, can be extended to \mvlo{} and \lo{} as well.

We believe that along the same lines, the multi-version conflict definition can be extended to \vwc. 

\subsection{Outline of a STM System using Multiversion Conflicts}
\label{subsec:algooutline}

Having developed a conflict definition that accommodates multiple versions, we describe the outline of a STM system.The main idea behind the algorithm is based on the notion serialization graph testing \cite{SatVid:2011:ICDCN, KuzSat:NI:ICDCN:2014} that was developed for databases \cite{WeiVoss:2002:Morg}. According to this idea, the STM system maintains a graph based on the \op{s} that have been executed so far. A new \op{} is allowed to execute only if it does not form a cycle in the graph.

But a few important questions arise about the implementation which is typical of any multi-version system: (a) how many version should the STM system store? (b) which version should a transaction read from?

The issue of online scheduling was analyzed in \secref{ols} which partly addresses the question of which version should a transaction read from. Since whichever version a transaction reads from can possibly cause another transaction to abort, in our implementation we have decided to read the closest available version that does not violate \mvop. Using these ideas, we are currently developing a new algorithm. %Some details of the algorithm are described in \cite{PriSat:MVC:Corr:2015}. 

To address the question on number of versions maintained, it was shown in \cite{Kumar+:MVTO:ICDCN:2014} that by not maintaining a limit on the number of versions, greater concurrency can be achieved. So, we do not keep any limit on the number of versions maintained in the STM system developed. But with this approach the number of version keep growing over time making the system inefficient. So, a garbage collection strategy that removes the unwanted versions is to be designed. We are currently working on it.

%\section{Discussion and Conclusion}
\section{Conclusion}
\label{sec:conc}

In this paper, we have presented a new conflict notion \emph{multi-version conflict}. Using this conflict notion, we developed a new subclass of opacity, \mvopty{} that admits multi-versioned histories and whose membership can be verified in polynomial time. We showed that co-opacity, a sub-class of \opty{} that is based on traditional conflicts, is a proper subset of this class. Further, the proposed conflict notion \mvconflict{} can be applied on \nseq{} histories as well unlike traditional conflicts. 

To demonstrate the effectiveness of the new conflict notion, we employed \opty{}, a popular \cc. As discussed, we believe that this conflict notion can be easily extended to other \cc{} such as \lopty{} and \vwc. 

An important requirement that arises while building a multi-version STM system using the propose conflict notion is to decide ``on the spot'' or schedule online among the various versions available, which version should a transaction read from? Unfortunately this notion of online scheduling can sometimes lead to unnecessary aborts of transactions if not done carefully. To capture the notion of online scheduling which avoid unnecessary aborts in STMs, we have identified a new concept \textit{\olsness}. We show that it is impossible for a STM system that is permissive to avoid such un-necessary aborts i.e. satisfy \olsness{} w.r.t \opty. We show this result is true for \mvopty{} as well. 

Actually, multi-version conflict notions have been proposed for multi-version databases as well \cite{HadzPapad:AAMVCC:PODS:1985}. But in their model of histories, the authors do not specify which version a transaction reads. So it is not clear how their model will be applicable to STM histories. Moreover, their notion of conflicts were applicable only for sequential histories. 

As a part of the ongoing work, we plan to develop an efficient STM system using the \mvconflict{s} and measure the cost of the implementation.

{
\bibliography{citations}
}

\end{document}